\theoremstyle{plain}
\newtheorem{theorem}{Theorem}
\newtheorem{corollary}[theorem]{Corollary}
\newtheorem{claim}{Claim}
\newtheorem{lemma}[theorem]{Lemma}
\newtheorem{example}[theorem]{Example}
\newtheorem{definition}[theorem]{Definition}
\newtheorem*{reduction}{Reduction}
\newcommand{\ProblemFormat}[1]{{\sc #1}}
\newcommand{\ProblemName}[1]{\ProblemFormat{#1}\xspace}
\newcommand{\probowakmedian}[0]{\ProblemName{OWA $k$-median}}
\newcommand{\metricprobowakmedian}[0]{\ProblemName{Metric OWA $k$-median}}
\newcommand{\probft}[0]{\ProblemName{Fault Tolerant}}
\newcommand{\probftkmed}[0]{\ProblemName{\probft\xspace\probkmedian}}
\newcommand{\probftkmedmulti}[0]{\ProblemName{\probftkmed with Clients Multiplicities}}
\newcommand{\probmetricftkmedmulti}[0]{\ProblemName{Metric \probftkmedmulti}}
\newcommand{\wmin}{{w\mathrm{_{dis}}}}
\newcommand{\owakmed}{{\mathrm{OWA\text{-}k\text{-}med}}}
\newcommand{\ftkmedmulti}{{\mathrm{FT\text{-}k\text{-}med\text{-}multi}}}
\newcommand{\probPAV}[0]{\ProblemName{Proportional Approval Voting}}
\newcommand{\probkmedian}[0]{\ProblemName{$k$-median}}
\newcommand{\probharkmedian}[0]{\ProblemName{Harmonic $k$-median}}
\newcommand{\probrftkmedian}[0]{\ProblemName{$r$-Fault Tolerant $k$-median}}
\newcommand{\probftkmedian}[0]{\ProblemName{Fault Tolerant $k$-median}}
\newcommand{\probkmeans}[0]{\ProblemName{$k$-means}}
\newcommand{\probkcenter}[0]{\ProblemName{$k$-center}}
\newcommand{\probsetcover}[0]{\ProblemName{Set Cover}}
\newcommand{\cost}{{\mathrm{cost}}}
\newcommand{\cov}{\mathrm{cov}}
\newcommand{\OPT}{{\mathrm{OPT}}}
\newcommand{\OPTLP}{{\mathrm{OPT}}^{\mathrm{LP}}}
\newcommand{\ratiohkm}{2.3589\xspace}
\newcommand{\xlp}{x^*}
\newcommand{\ylp}{y^*}
\newcommand{\xip}{x^\OPT}
\newcommand{\yip}{y^\OPT}
\newcommand{\lp}{linear program (\ref{eq:lp_min}--\ref{eq:lp_yixij})\xspace}
\newcommand{\prob}[1]{\mathrm{Pr}[#1]}
\newcommand{\probBigPar}[1]{\mathrm{Pr}\left[#1\right]}
\newcommand{\expected}[1]{\mathbb{E}[#1]}
\newcommand{\expectedBigPar}[1]{\mathbb{E}\left[#1\right]}
\newcommand{\Sum}{\ensuremath{\mathlarger{\sum}}}
\newcommand{\Dcal}{\ensuremath{\mathcal{D}}\xspace}
\newcommand{\Fcal}{\ensuremath{\mathcal{F}}\xspace}
\newcommand{\np}{{\mathsf{NP}}}
\newcommand{\p}{{\mathsf{P}}}
\newcommand{\calA}{\mathcal{A}}
\newcommand{\calS}{\mathcal{S}}
\newcommand{\calM}{\mathcal{M}}
\newcommand{\naturals}{{{\mathbb{N}}}}
\newcommand{\reals}{{{\mathbb{R}}}}
\newcommand{\har}{{{{\mathrm{har}}}}}
\newcommand{\geom}{{{{\mathrm{geom}}}}}
\newcommand{\sub}{{{{\mathrm{sub}}}}}
\newcommand{\submax}{{{{\mathrm{submax}}}}}
\begin{document}

\title{Proportional Approval Voting, Harmonic k-median,\\ and Negative Association}

\date{}


\author{
Jaros{\l}aw Byrka\thanks{University of Wroc{\l}aw, Wroc{\l}aw, Poland, \texttt{jby@cs.uni.wroc.pl}.}
\and
Piotr Skowron\thanks{University of Warsaw, Warsaw, Poland, \texttt{p.skowron@mimuw.edu.pl}.}
\and
Krzysztof Sornat\thanks{University of Wroc{\l}aw, Wroc{\l}aw, Poland, \texttt{krzysztof.sornat@cs.uni.wroc.pl}.}
}



\maketitle

\begin{abstract}
We study a generic framework that provides a unified view on two important classes of problems: (i) extensions of the k-median problem where clients are interested in having multiple facilities in their vicinity (e.g., due to the fact that, with some small probability, the closest facility might be malfunctioning and so might not be available for using), and (ii) finding winners according to some appealing multiwinner election rules, i.e., election system aimed for choosing representatives bodies, such as parliaments, based on preferences of a population of voters over individual candidates. Each problem in our framework is associated with a vector of weights: we show that the approximability of the problem depends on structural properties of these vectors. We specifically focus on the harmonic sequence of weights, since it results in particularly appealing properties of the considered problem. In particular, the objective function interpreted in a multiwinner election setup reflects to the well-known Proportional Approval Voting (PAV) rule.   

Our main result is that, due to the specific (harmonic) structure of weights, the problem allows constant factor approximation. This is surprising since the problem can be interpreted as a variant of the $k$-median problem where we do not assume that the connection costs satisfy the triangle inequality. To the best of our knowledge this is the first constant factor approximation algorithm for a variant of $k$-median that does not require this assumption.
The algorithm we propose is based on dependent rounding [Srinivasan, FOCS'01] applied to the solution of a natural LP-relaxation of the problem. The rounding process is well known to produce distributions over integral solutions satisfying \emph{Negative Correlation} (NC), which is usually sufficient for the analysis of approximation guarantees offered by rounding procedures. In our analysis, however, we need to use the fact that the carefully implemented rounding process
satisfies a stronger property, called \emph{Negative Association} (NA), which allows us to apply standard concentration bounds for \emph{conditional} random variables. 
\end{abstract}

\newpage
\section{Introduction}

This paper considers a general unified framework for two classes of problems: (i) extensions of the k-median problem where clients care about having multiple facilities in their vicinity, and (ii) finding winning committees according to a number of well-known, but hard-to-compute multiwinner election systems\footnote{We note that multiwinner election rules have many applications beyond the political domain---such applications include finding a set of results a search engine should display~\cite{DworkKNS01}, recommending a set of products a company should offer to its customers~\cite{LuB11,LuB15}, allocating shared resources among agents~\cite{SkowronFL16, mon:j:monroe}, solving variants of segmentation problems~\cite{KleinbergPR04}, or even improving genetic algorithms~\cite{FaliszewskiSSS16}.}. Let us first formalize our framework; we will discuss motivation and explain the relation to $k$-median and to multiwinner elections later on.

For a natural number $t \in \naturals$, by $[t]$ we denote the set $\{1, \ldots, t\}$.
Let $\Fcal= \{F_1, \ldots, F_m\}$ be the set of $m$ facilities and let $\Dcal = \{D_1, \ldots, D_n\}$ be the set of $n$ clients (demands). The goal is to pick a set of $k$ facilities that altogether are most satisfying for the clients. Different clients can have different preferences over individual facilities---by $c_{i, j}$ we denote the \emph{cost} that client $D_j$ suffers when using facility $F_i$ (this can be, e.g., the communication cost of client $D_j$ to facility $F_i$, or a value quantifying the level of personal dissatisfaction of $D_j$ from $F_i$). Following Yager~\cite{Yager88}, we use ordered weighted average (OWA) operators to define the cost of a client for a bundle of $k$ facilities $C$. Formally, let $w = \big(w_1, \ldots, w_k \big)$ be a non-increasing vector of $k$ weights. We define the $w$-cost of a client $D_j$ for a size-$k$ set of facilities $C$ as $w(C, j) = \sum_{i=1}^{k} w_i c_{i}^\rightarrow(C,j)$, where $c^\rightarrow(C,j) = (c_{1}^\rightarrow(C,j), \ldots, c_{k}^\rightarrow(C,j)) = \mathrm{sort}_\mathrm{ASC}\left(\big\{c_{i, j} \colon F_i \in C\big\} \right)$ is a non-decreasing permutation of the costs of client $D_j$ for the facilities from $C$. Informally speaking, the highest weight is applied to the lowest cost, the second highest weight to the second lowest cost, etc. In this paper we study the following computational problem.
\begin{definition}[\probowakmedian]
In \probowakmedian we are given a set $\Dcal = \{D_1, \ldots, D_n\}$ of clients, a set $\Fcal= \{F_1, \ldots, F_m\}$ of facilities, a collection of clients' costs $\big(c_{i, j}\big)_{i \in [m], j \in [n]}$, a positive integer $k$ ($k \leq m$), and a vector of $k$ non-increasing weights $w = \big(w_1, \ldots, w_k \big)$. The task is to compute a subset $C$ of $\Fcal$ that minimizes the value
\begin{equation*}
w(C) = \sum_{j=1}^{n} w(C, j) = \sum_{j=1}^{n} \sum_{i=1}^{k} w_i c_{i}^\rightarrow(C,j) \text{.}
\end{equation*}
\end{definition}

Note that \probowakmedian with weights $(1, 0, 0, \ldots, 0)$ is the \probkmedian problem.
Sometimes the costs represent distances between clients and facilities. Formally, this means that there exists a metric space $\calM$ with a distance function $d\colon \calM \times \calM \to \reals_{\geq 0}$, where each client and each facility can be associated with a point in $\calM$ so that for each $F_i \in \Fcal$ and each $D_j \in \Dcal$ we have $d(i, j) = c_{i,j}$. When this is the case, we say that the \emph{costs satisfy the triangle inequality}, and use the terms ``costs'' and ``distance'' interchangeably. Then, we use the prefix $\textsc{Metric}$ for the names of our problems. E.g., by \metricprobowakmedian we denote the variant of \probowakmedian where the costs satisfy the triangle inequality.

We are specifically interested in the following two sequences of weights:
\begin{enumerate}[(1)]
\vspace{-0.5em}
\item \textbf{harmonic:} $w_{\har} = \big(1, \nicefrac{1}{2}, \nicefrac{1}{3}, \ldots, \nicefrac{1}{k} \big)$. By \probharkmedian we denote the \probowakmedian problem with the harmonic vector of weights.
\item \textbf{$p$-geometric:} $w_{\geom} = \big(1, p, p^2, \ldots, p^{k-1} \big)$, for some $p < 1$.
\end{enumerate}
The two aforementioned sequences of weights, $w_{\har}$ and $w_{\geom}$, have their natural interpretations, which we discuss later on (for instance, see Examples~\ref{ex:harmonic_weights}~and~\ref{ex:geometric_weights}). 

\subsection{Motivation}

In this subsection we discuss the applicability of the studied model in two settings.

\subsubsection*{Multiwinner Elections}
Different variants of the \probowakmedian problem are very closely related to the preference aggregation methods and multiwinner election rules studied in the computational social choice, in particular, and in AI, in general---we summarize this relation in Table~\ref{tab:relation_between_problems} and in Figure~\ref{fig:owakmedian_sets}. In particular, one can observe that each ``median'' problem is associated with a corresponding ``winner'' problem. Specifically, the \probkmedian problem is known in computational social choice as the Chamberlin--Courant rule. Let us now explain the differences between the winner (``election'') and the median (``facility location'') problems:
\begin{enumerate}
\item The election problems are usually formulated as maximization problems, where instead of (negative) costs we have (positive) utilities.
The two variants, the minimization (with costs) and the maximization (with utilities) have the same optimal solutions. Yet, there is a substantial difference in their approximability.

Approximating the minimization variant is usually much harder. For instance, consider the Chamberlin--Courant (CC) rule which is defined by using the sequence of weights $(1, 0, 0, \ldots, 0)$. In the maximization variant standard arguments can be used to prove that a greedy procedure yields the approximation ratio of $(1 - \nicefrac{1}{e})$. This stands in a sharp contrast to the case when the same rule is expressed as the minimization one; in such a case we cannot hope for virtually any approximation~\cite{SkowronFS15} (we will extend this result in Theorem~\ref{thm:cc_no_approximation}). Approximating the minimization variant is also more desired. E.g., a $\nicefrac{1}{2}$-approximation algorithm for (maximization) CC can effectively ignore half of the population of clients, whereas it was argued~\cite{SkowronFS15} that a $2$-approximation algorithm for the minimization (if existed) would be more powerful. 
In this paper we study the harder minimization variant, and give the first constant-factor approximation algorithm for the minimization OWA-Winner with the harmonic weights.

\item In facility location problems it is usually assumed that the costs satisfy the triangle inequality. This relates to the previous point: since the problem cannot be well approximated in the general setting, one needs to make additional assumptions. One of our main results is showing that there is a $k$-median problem (\probowakmedian with harmonic weights) that admits a constant-factor approximation without assuming that the costs satisfy the triangle inequality; this is the first known result of this kind.  

\end{enumerate}

{
\renewcommand{\arraystretch}{2}
\begin{table}[t]
\footnotesize
\begin{tabular}{l|l | p{8cm}}
$k$-median problem & election rule & comment\\
\hline
\probowakmedian       & OWA-Winner~\cite{SkowronFL16}        & Finding winners according to OWA-Winner rules is the maximization variant of \probowakmedian (utilities instead of costs). \\
                & Thiele methods~\cite{thi:j:approval-OWA}  & Thiele methods are OWA-Winner rules for 0/1 costs. \\
\probharkmedian & PAV~\cite{thi:j:approval-OWA}           & In PAV we assume  the 0/1 costs. So far, only the maximization variant was considered in the literature.\\
\probkmedian    & Chamberlin--Courant~\cite{cha-cou:j:cc} & In CC, usually some specific form of utilities is assumed---different utilities have been considered, but always in the maximization variant (utilities instead of costs).
\end{tabular}
\caption{The relation between the \probkmedian problems and the corresponding problems studied in AI, in particular in the computational social choice community.}\label{tab:relation_between_problems}
\end{table}
}
\begin{figure}[t]
\captionsetup{
 format=plain,
 margin={-1em,0.5em},
 justification=justified,
}
\floatbox[{\capbeside\thisfloatsetup{capbesideposition={right,top},capbesidewidth=0cm}}]{figure}
{\caption{The relation between the considered models. \probowakmedian is the most general model. \probPAV and \probharkmedian due to the use of harmonic weights can be viewed as natural extensions of the well known and commonly used D'Hondt method of apportionment~\cite{BrillLS17}.}\label{fig:owakmedian_sets}}
{ \includegraphics[width=0.45\textwidth]{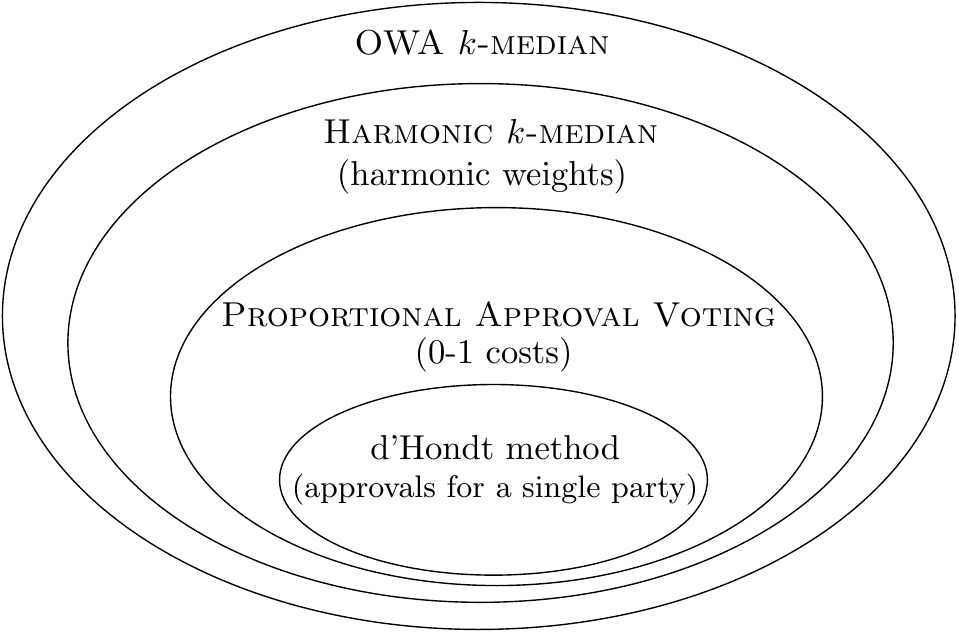}}
\end{figure}

The special case of \probharkmedian where each cost belongs to the binary set $\{0, 1\}$ is equivalent to finding winners according to \probPAV. The harmonic sequence $w_{\har} = (1, \nicefrac{1}{2}, \nicefrac{1}{3}, \ldots, \nicefrac{1}{k})$ is in a way exceptional: indeed, PAV can be viewed as an extension of the well known D'Hondt method of apportionment (used for electing parliaments in many contemporary democracies) to the case where the voters can vote for individual candidates rather than for political parties~\cite{BrillLS17}. Further, PAV satisfies several other appealing properties, such as extended justified representation~\cite{AzizBCEFW17}. This is one of the reasons why we are specifically interested in the harmonic weights. For more discussion on PAV and other approval-based rules, we refer the reader to the survey of Kilgour~\cite{kil-handbook}.

\subsubsection*{\probowakmedian as an Extension of \probkmedian}

Intuitively, our general formulation extends \probkmedian to scenarios where the clients not only use their most preferred facilities, but when there exists a more complex relation of ``using the facilities'' by the clients. Similar intuition is captured by  the \probft version of the \probkmedian problem introduced by Swamy and Shmoys~\cite{SwamyS08} and recently studied by Hajiaghayi et al.~\cite{HajiaghayiHLLS16}. There, the idea is that the facilities can be malfunctioning, and to increase the resilience to their failures each client needs to be connected to several of them.

\begin{definition}[\probftkmed]
In \probftkmedian problem we are given the same input as in \probkmedian, and additionally, for each client $D_j$ we are given a natural number $r_j \geq 1$, called \emph{the connectivity requirement}. The cost of a client $D_j$ is the sum of its costs for the $r_j$ closest open facilities. Similarly as in \probkmedian, we aim at choosing at most $k$ facilities so that the sum of the costs is minimized.
\end{definition}

When the values $\big(r_j\big)_{j \in [n]}$ are all the same, i.e., if $r_j = r$ for all $j$, then \probftkmed is called \probrftkmedian and it can be expressed as \probowakmedian for the weight vector $w$ with $r$ ones followed by $k-r$ zeros. Yet, in the typical setting of \probkmedian problems one additionally assumes that the costs between clients and facilities behave like distances, i.e., that they satisfy the triangle inequality. Indeed, the $(2.675+\epsilon)$-approximation algorithm for \probkmedian~\cite{ByrkaPRST17}, the $93$-approximation algorithm for \probftkmedian~\cite{HajiaghayiHLLS16}, the $2$-approximation algorithm for \probkcenter~\cite{HochbaumS85}, and the $6.357$-approximation algorithm for \probkmeans~\cite{AhmadianNSW17}, they all use triangle inequalities. Moreover it can be shown by straightforward reductions from the \probsetcover problem that there are no constant factor approximation algorithms for all these settings with general (non-metric) connection costs unless $\p = \np$.

Using harmonic or geometric OWA weights is also well-justified in case of facility location problems, as illustrated by the following examples.

\begin{example}[Harmonic weights: proportionality]\label{ex:harmonic_weights}
Assume there are $\ell \leq k$ cities, and for $i \in [\ell]$ let $N_i$ denote the set of clients who live in the $i$-th city. For the sake of simplicity, let us assume that $k\cdot |N_i|$ is divisible by $n$. Further, assume that the cost of traveling between any two points within a single city is negligible (equal to zero), and that the cost of traveling between different cities is equal to one. Our goal is to decide in which cities the $k$ facilities should be opened; naturally, we set the cost of a client for a facility opened in the same city to zero, and---in another city---to one. Let us consider \probowakmedian with the harmonic sequence of weights $w_{\har}$. Let $n_i$ denote the number of facilities opened in the $i$-th city in the optimal solution. We will show that for each $i$ we have $n_i = \frac{k|N_i|}{n}$, i.e., that the number of facilities opened in each city is proportional to its population. Towards a contradiction assume there are two cities, $i$ and $j$, with $n_i \geq \frac{k|N_i|}{n} + 1$ and $n_j \leq \frac{k|N_j|}{n} - 1$. By closing one facility in the $i$-th city and opening one in the $j$-th city, we decrease the total cost by at least:
\begin{align*}
|N_j| \cdot w_{n_j + 1} - |N_i| \cdot w_{n_i} = \frac{|N_i|}{n_j + 1} - \frac{|N_i|}{n_i} > \frac{|N_j| n}{k |N_j|} - \frac{|N_i| n}{k |N_i|} = 0 \text{.}
\end{align*}
Since, we decreased the cost of the clients, this could not be an optimal solution. As a result we see that indeed for each $i$ we have $n_i = \frac{k|N_i|}{n}$.
\end{example}

\begin{example}[Geometric weights: probabilities of failures]\label{ex:geometric_weights}
Assume that we want to select $k$ facilities and that each client will be using his or her favorite facility only. Yet, when a client wants to use a facility, it can be malfunctioning with some probability $p$; in such a case the client goes to her second most preferred facility; if the second facility is not working properly, the client goes to the third one, etc. Thus, a client uses her most preferred facility with probability $1-p$, her second most preferred facility with probability $p(1-p)$, the third one with probability $p^2(1-p)$, etc. As a result, the expected cost of a client $D_j$ for the bundle of $k$ facilities $C$ is equal to $w(C, j)$ for the weight vector $w = \big(1-p, (1-p)p, \ldots, (1-p)p^{k-1} \big)$. Finding a set of facilities, that minimize the expected cost of all clients is equivalent to solving \probowakmedian for the $p$-geometric sequence of weights (in fact, the sequence that we use is a $p$-geometric sequence multiplied by $(1-p)$, yet multiplication of the weight vector by a constant does not influence the structure of the optimal solutions).
\end{example}

\subsection{Our Results and Techniques}

Our main result is showing, that there exists a \ratiohkm-approximation algorithm for \probharkmedian for general connection costs (not assuming triangle inequalities). This is in contrast to the innaproximability of most clustering settings with general connection costs.

Our algorithm is based on dependent rounding of a solution to a natural linear program (LP) relaxation of the problem. We use the \emph{dependent rounding} (DR) studied by Srinivasan et al.~\cite{Srinivasan01,GandhiKPS06}, which transforms in a randomized way a fractional vector into an integral one. The sum-preservation property of DR ensures that exactly $k$ facilities are opened. 

DR satisfies, what is well known as \emph{negative correlation (NC)}---intuitively, this implies that the sums of subsets of random variables describing the outcome are more centered around their expected values than if the fractional variables were rounded independently. More precisely, negative correlation allows one to use standard concentration bounds such as the Chernoff-Hoeffding bound. Yet, interestingly, we find out that NC is not sufficient for our analysis in which we need a conditional variant of the concentration bound.    
The property that is sufficient for conditional bounds is
\emph{negative association} (NA)~\cite{negativeAssociation}.
In fact its special case that we call \emph{binary negative association} (BNA), is sufficient for our analysis. It captures the capability of reasoning about conditional probabilities. Thus, our work demonstrates how to apply the (B)NA property in the analysis of approximation algorithms based on DR. To the best of our knowledge, \probharkmedian is the first natural computational problem, where it is essential to use BNA in the analysis of the algorithm.

We additionally show that the 93-approximation algorithm of Hajiaghayi et al.~\cite{HajiaghayiHLLS16} can be extended to \probowakmedian (our technique is summarized in Section~\ref{sec:reduction_to_fault_tol})---this time we additionally need to assume that the costs satisfy the triangle inequality. Indeed, without this assumption the problem is hard to approximate for a large class of weight vectors; for instance, for $p$-geometric sequences with $p < \nicefrac{1}{e}$ (Theorem~\ref{thm:metric_geom_no_approximation} and Corollary~\ref{cor:metric_geom_no_approximation}) or for sequences where there exists $\lambda \in (0, 1)$ such that clients care only about the $\lambda$-fraction of opened facilities (Theorem~\ref{thm:cc_no_approximation}). Due to space constraints the formulation and the discussion on these hardness results are redelegated to Appendix~\ref{sec:hardness_general}. 

For the paper to be self-contained, in Appendix~\ref{sec:rounding_proc} we discuss in detail the process of dependent rounding (including a few illustrative examples); in particular, we provide an alternative proof that DR satisfies binary negative association. Our proof is more direct and shorter than the proofs known in the literature~\cite{KramerCR11}.

\section{\probharkmedian and \probPAV:\\a~$\ratiohkm$-approximation Algorithm}\label{sec:main_approximation}

In this section we demonstrate how to use the Binary Negative Association (BNA) property of Dependent Rounding (DR) to derive our main result---a randomized constant-factor approximation algorithm  for \probharkmedian. In Appendix~\ref{sec:rounding_proc} we provide a detailed discussion on DR and BNA, including a proof that DR satisfies BNA, and several examples.

\begin{theorem}\label{thm:alg_for_kmedian}
There exists a polynomial time randomized algorithm for \probharkmedian that gives $\ratiohkm$-approximation in expectation.
\end{theorem}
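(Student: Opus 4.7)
The plan is to (i) formulate a natural LP relaxation of \probharkmedian, (ii) solve it to obtain a fractional optimum $(x^*,y^*)$, (iii) round the opening variables $y^*$ using Srinivasan's Dependent Rounding (DR), and (iv) bound the expected cost of the resulting integral solution via the Binary Negative Association (BNA) property of DR.

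First I would write the LP. Introduce opening variables $y_i\in[0,1]$ together with rank-assignment variables $x_{ij}^{t}\in[0,1]$, where $x_{ij}^{t}$ intuitively represents ``$F_i$ is the $t$-th closest opened facility to $D_j$''. The relaxation is
\begin{align*}
\text{min}\quad & \sum_{j=1}^{n}\sum_{t=1}^{k}\frac{1}{t}\sum_{i=1}^{m} c_{i,j}\,x_{ij}^{t}\\
\text{s.t.}\quad & \sum_{i=1}^{m} y_i \leq k,\quad \sum_{t=1}^{k} x_{ij}^{t}\leq y_i\;\;\forall i,j,\quad \sum_{i=1}^{m} x_{ij}^{t}=1\;\;\forall j,\,t\in[k].
\end{align*}
Any feasible committee induces an integral solution of the same cost, so the LP optimum $\OPTLP$ lower bounds $\OPT$. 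I would then solve this LP in polynomial time to obtain $(x^*,y^*)$, and feed $y^*$ into DR, producing a random vector $\widetilde y\in\{0,1\}^m$ with correct marginals $\mathbb{P}[\widetilde y_i=1]=y^*_i$, preserved sum $\sum_i\widetilde y_i=k$, and, crucially, binary negative association. The opened committee is $C=\{F_i:\widetilde y_i=1\}$.

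For the analysis, I would fix a client $D_j$, sort the facilities so that $c_{\sigma_1,j}\leq c_{\sigma_2,j}\leq\cdots\leq c_{\sigma_m,j}$, and set $Y_s=\widetilde y_{\sigma_s}$ and $T_s=Y_1+\cdots+Y_s$. Since $F_{\sigma_s}$, when opened, has rank $T_s$ among opened facilities, the harmonic cost paid by $D_j$ is $\sum_{s=1}^{m}(Y_s/T_s)\,c_{\sigma_s,j}$. Taking expectations and using $\mathbb{P}[Y_s=1]=y^*_{\sigma_s}$ gives
\[
\mathbb{E}[\text{cost of }D_j]=\sum_{s=1}^{m}c_{\sigma_s,j}\,y^*_{\sigma_s}\,\mathbb{E}\!\left[\frac{1}{T_s}\,\bigg|\,Y_s=1\right].
\]
The task then reduces to upper bounding $\mathbb{E}[1/T_s\mid Y_s=1]$ in terms of the LP prefix sums $S_s:=\sum_{u\leq s}y^*_{\sigma_u}$ and matching the sum of these bounds, term by term, against the per-client LP cost $\sum_t(1/t)\sum_i c_{i,j}\,x_{ij}^{*\,t}$.

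The main obstacle is precisely this conditional step: BNA is needed because mere negative correlation does not transfer well under conditioning. Conditioning on $\{Y_s=1\}$ preserves binary negative association of the remaining coordinates with marginals still dominated by the $y^*_{\sigma_u}$, so a Chernoff-type lower-tail bound for $T_{s-1}^{(s)}:=\sum_{u<s}Y_u$ applies under the conditional distribution; integrating that tail against $1/(1+T_{s-1}^{(s)})$ yields the estimate. Plugging these estimates back and balancing them against the LP rank decomposition should give $\mathbb{E}[\text{cost of }D_j]\leq\ratiohkm\cdot(\text{LP cost of }D_j)$, and summing over $j$ finishes the proof by linearity of expectation. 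The numerically delicate part is optimizing the trade-offs inside the tail calculation so that the final constant comes out to exactly $\ratiohkm$.
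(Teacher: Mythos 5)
Your high-level plan matches the paper: same LP relaxation, dependent rounding of $y^*$, and bounding per-client cost via BNA and a Chernoff lower-tail estimate. You also correctly identify the starting decomposition $\mathbb{E}[\mathrm{cost}_j] \le \sum_s c_{\sigma_s,j}\,y^*_{\sigma_s}\,\mathbb{E}[1/T_s \mid Y_s=1]$. However, there is a real gap at the step you call ``matching the sum of these bounds, term by term'': the coefficients $c_{\sigma_s,j}\,y^*_{\sigma_s}$ are \emph{not monotone} in $s$ (the costs $c_{\sigma_s,j}$ are nondecreasing, but the $y^*_{\sigma_s}$ are arbitrary values in $[0,1]$), so you cannot directly compare the weighted sum of conditional expectations against the LP's rank decomposition by a majorization/Chebyshev-type argument. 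The paper addresses exactly this by subdividing each fractional facility opening into $\epsilon$-pieces $(Z_r)$ that all have identical marginals $\epsilon$; after this refinement the coefficients become $\epsilon\, c_{\sub^{-1}(r),j}$, which are nondecreasing in $r$, and only then do the weighted-average lemma (Lemma~\ref{lem:ca_over_cb_leq_ab}) and the ``max over unit intervals'' lemma (Lemma~\ref{lem:sai_over_sbi_leq_max_ai_over_bi}) give a clean comparison to $\OPTLP_j$. That uniformization trick is the heart of the argument and is missing from your proposal.

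Two further points. First, you invoke a claim that ``conditioning on $\{Y_s=1\}$ preserves binary negative association of the remaining coordinates''; this is essentially Conditional NA, which is strictly stronger than BNA and needs its own justification. The paper deliberately avoids relying on it: it instead uses BNA plus the sum-preservation identity $\sum_r Z_r = k$ to rewrite the lower tail $\prob{\sum_{r'<r} Z_{r'} \le t-1 \mid Z_r=1}$ as an \emph{unconditional} tail probability $\prob{\sum_{r'\le r} Z_{r'}\le t}$, to which ordinary NC-Chernoff applies (Lemma~\ref{lem:Hr_chernoff}). Second, DR does \emph{not} produce a (B)NA distribution unless the pairing of fractional variables follows a fixed tournament tree (see Example~\ref{ex:no_bna} in the paper); your proposal treats BNA as an automatic property of DR, which is incorrect as stated. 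Both issues are fixable, but as written the proposal does not yet constitute a complete proof.
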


\begin{corollary}
There exists a polynomial time randomized algorithm for the minimization \probPAV that gives $\ratiohkm$-approximation in expectation.
\end{corollary}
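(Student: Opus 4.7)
The plan is to derive the corollary directly from Theorem~\ref{thm:alg_for_kmedian} by observing that minimization \probPAV is precisely \probharkmedian restricted to instances with binary costs. Concretely, given a \probPAV instance with voters $D_1,\dots,D_n$, candidates $F_1,\dots,F_m$, a committee size $k$, and an approval relation, I would build a \probharkmedian instance on the same $\Dcal$, $\Fcal$, and $k$ by setting $c_{i,j}=0$ whenever $D_j$ approves $F_i$ and $c_{i,j}=1$ otherwise. For any size-$k$ committee $C$, letting $t_j$ denote the number of candidates in $C$ approved by $D_j$, the sorted cost vector $c^\rightarrow(C,j)$ consists of $t_j$ zeros followed by $k-t_j$ ones, so
$$w_\har(C,j) \;=\; \sum_{r=t_j+1}^{k}\tfrac{1}{r} \;=\; H_k - H_{t_j}.$$
Summing over $j$, the total $w_\har$-cost of $C$ equals $n H_k - \sum_{j=1}^{n} H_{t_j}$, which is, up to the additive constant $n H_k$, the negation of the standard PAV score. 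Hence minimizing the $w_\har$-cost is equivalent to minimizing the PAV disutility, i.e., exactly the minimization-\probPAV objective; this is the ``$\pavmin$'' viewpoint used in the paper.

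Feeding the constructed \probharkmedian instance into the algorithm of Theorem~\ref{thm:alg_for_kmedian} then returns, in polynomial time, a committee $C$ with $\expected{w_\har(C)} \le \ratiohkm \cdot \OPT$, where $\OPT$ is the optimal $w_\har$-cost and, by the reduction, also the optimal minimization-\probPAV cost. Returning the same committee for the original \probPAV instance therefore yields the claimed $\ratiohkm$-approximation in expectation. There is no substantive obstacle: the only point to verify is that Theorem~\ref{thm:alg_for_kmedian} imposes no structural restriction---such as strict positivity or a triangle inequality---on the input costs, but the LP relaxation and the dependent-rounding analysis underlying that theorem are valid for arbitrary nonnegative $c_{i,j}$, so the reduction goes through verbatim and no separate analysis is required.
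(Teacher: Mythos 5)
Your proposal is correct and matches the paper's intent exactly: the paper treats this corollary as immediate because, as stated in the introduction, minimization \probPAV is precisely the special case of \probharkmedian with binary costs, and Theorem~\ref{thm:alg_for_kmedian} places no restriction on the nonnegative costs. Your explicit computation $w_\har(C,j) = H_k - H_{t_j}$ and the observation that the total cost differs from the negated PAV score only by the additive constant $n H_k$ correctly spell out what the paper leaves implicit.
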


In the remainder of this section we will prove the statement of Theorem~\ref{thm:alg_for_kmedian}.
Consider the following \lp that is a relaxation of a natural ILP for \probharkmedian.\\
\begin{minipage}[c]{0.43\columnwidth}
 \hspace{-1cm}
 \begin{align}
  \text{min} \quad \sum_{j=1}^{n} \sum_{\ell=1}^{k} \sum_{i=1}^{m} & \: w_\ell \cdot x_{ij}^\ell \cdot c_{ij}\label{eq:lp_min} &\\
  \sum_{i=1}^m y_i &= k\label{eq:lp_open_k}&
 \end{align}
\end{minipage}
\begin{minipage}[c]{0.56\columnwidth}
 \begin{align}
 \hspace{0.2\columnwidth} \sum_{\ell=1}^{k} x_{ij}^{\ell} &\le y_i \quad\; \forall i \in [m], \; j \in [n] \label{eq:lp_assigning}\\
 \sum_{i=1}^{m} x_{ij}^\ell &\ge 1 \quad\;\: \forall j \in [n], \; \ell \in [k] \label{eq:lp_served}
 \end{align}
\end{minipage}
 \vspace{-0.2cm}
 \begin{equation}
  \hspace{0.43\columnwidth} y_i, x_{ij}^\ell \in [0,1] \qquad \forall i \in [m], \; j \in [n], \; \ell \in [k]\label{eq:lp_yixij}
 \end{equation}

The intuitive meaning of the variables and constraints of the above LP is as follows. Variable $y_i$ denotes how much facility $F_i$ is opened. Integral values $1$ and $0$ correspond to, respectively, opening and not opening the $i$-th facility. Constraint \eqref{eq:lp_open_k} encodes opening exactly $k$ facilities. Each client $D_j \in \Dcal$ has to be assigned to each among $k$ opened facilities with different weights. For that we copy each client $k$ times: the $\ell$-th copy of a client $D_j$ is assigned to the $\ell$-th closest to $D_j$ open facility. Variable $x_{ij}^\ell$ denotes how much the $\ell$-th copy of $D_j$ is assigned to facility $F_i$. In an integral solution we have $x_{ij}^\ell \in \{0,1\}$, which means that the $\ell$-th copy of a client can be either assigned or not to the respective facility. The objective function \eqref{eq:lp_min} encodes the cost of assigning all copies of all clients to the opened facilities, applying proper weights. Constraint \eqref{eq:lp_assigning} prevents an assignment of a copy of a client to a not-opened part of a facility. In an integer solution it also forces assigning different copies of a client to different facilities. Observe that, due to non-increasing weights $w_\ell$, the objective \eqref{eq:lp_min} is smaller if an $\ell'$-th copy of a client is assigned to a closer facility than an $\ell''$-th copy, whenever $\ell'<\ell''$. Constraint \eqref{eq:lp_served} ensures that each copy of a client is served by some facility.

Just like in most facility location settings it is crucial to select the facilities to open, and the later assignment of clients to facilities can be done optimally by a simple greedy procedure. We propose to select the set of facilities in a randomized way by applying the DR procedure to the $y$ vector from an optimal fractional solution to \lp. This turns out to be a surprisingly effective methodology for \probharkmedian.

\subsection{Analysis of the Algorithm}

Let $\OPTLP$ be the value of an optimal solution $(\xlp,\ylp)$ to the \lp. Let $\OPT$ be the value of an optimal solution $(\xip,\yip)$ for \probharkmedian. Easily we can see that $(\xip,\yip)$ is a feasible solution to the \lp, so $\OPTLP \leq \OPT$. Let $Y=(Y_1,\dots,Y_m)$ be the random solution obtained by applying the DR procedure described in Appendix~\ref{sec:rounding_proc} to the vector $\ylp$. Recall that DR preserves the sum of entries (see Appendix~\ref{sec:rounding_proc}), hence we have exactly $k$ facilities opened. It is straightforward to assign clients to the open facilities, so the variables $X = (X_{ij}^\ell)_{j \in [n], i \in [m], \ell \in [k]}$ are easily determined. 

We will show that $\expected{\cost(Y)} \leq \ratiohkm \cdot \OPTLP$. In fact, we will show that $\expected{\cost_j(Y)} \leq \ratiohkm \cdot \OPTLP_j$, where the subindex $j$ extracts the cost of assigning client $D_j$ to the facilities in the solution returned by the algorithm. 
In our analysis we focus on a single client $D_j \in \Dcal$. Next, we reorder the facilities $\{F_1,F_2,\dots,F_m\}$ in the non-decreasing order of their connection costs to $D_j$ (i.e., in the non-decreasing order of $c_{ij}$). Thus, from now on, facility $F_i$ is the $i$-th closest facility to client $D_j$; ties are resolved in an arbitrary but fixed way. 

\begin{figure}[t]
  \includegraphics[trim=0 0 0 0, clip, scale=0.78]{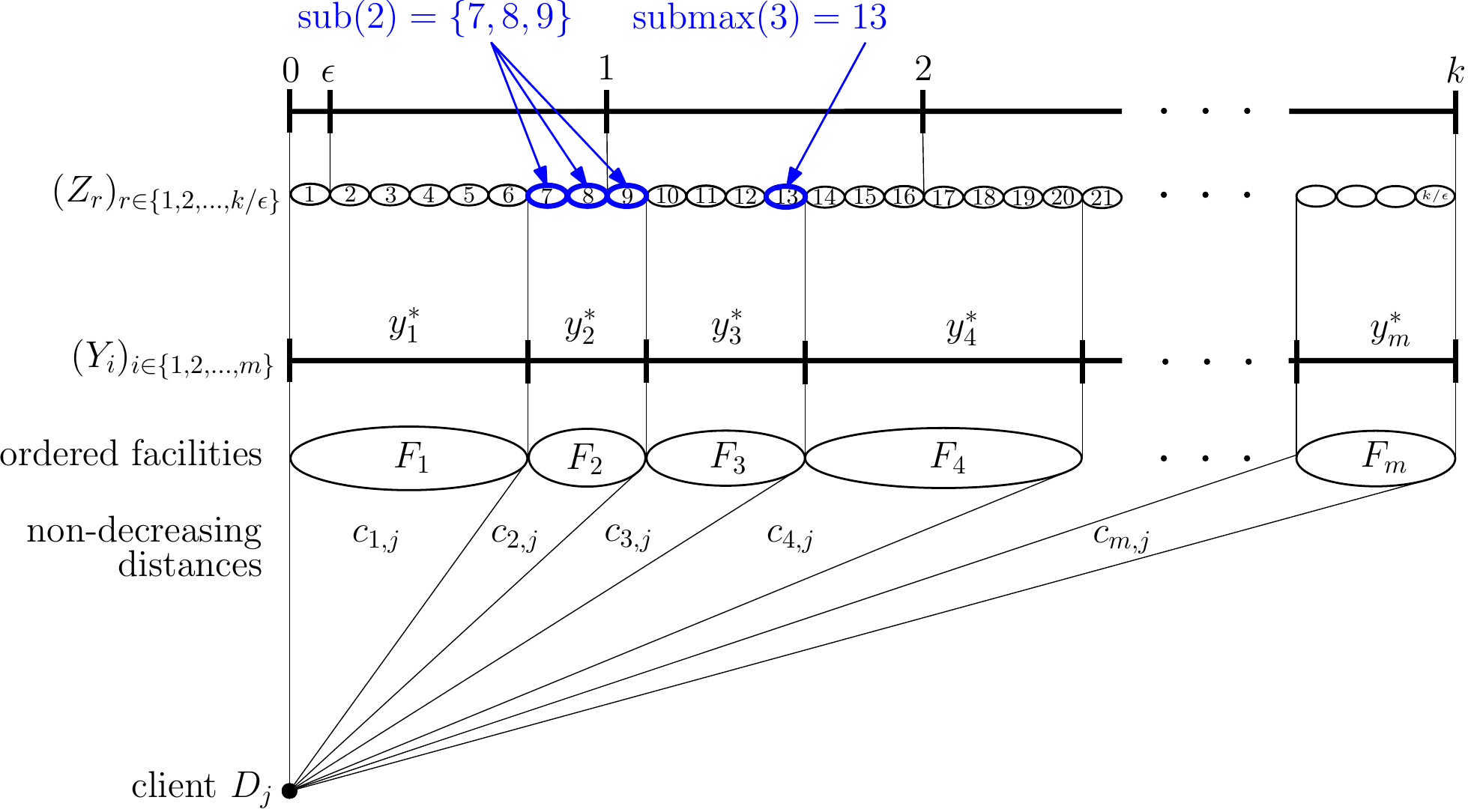}
  \caption{Ordering of the facilities by $c_{i,j}$ for the chosen client $D_j$. Definitions of the variables $Y_i$, $Z_r$ and of the indices $\sub(i)$ and $\submax(i)$.}\label{fig:ordering}
\end{figure} 

The ordering of the facilities is depicted in Figure~\ref{fig:ordering}, which also includes information about the fractional opening of facilities in $\ylp$, i.e., facility $F_i$ is represented by an interval of length $y_i^*$. The total length of all intervals equals $k$. Next, we subdivide each interval into a set of (small) $\epsilon$-size pieces (called $\epsilon$-subintervals); $\epsilon$ is selected so that $\nicefrac{1}{\epsilon}$, and $\nicefrac{y_i^*}{\epsilon}$ for each $i$, are integers. 
Note that the values $y_i^*$, which originate from the solution returned by an LP solver, are rational numbers. The subdivision of $[0,k]$ into $\epsilon$-subintervals is shown in Figure~\ref{fig:ordering} on the "$(Z_r)_{r \in \{1,2,\dots,\nicefrac{k}{\epsilon}\}}$" level.

The idea behind introducing the $\epsilon$-subintervals is the following. Although computationally the algorithm applies DR to the $\ylp$ variables, for the sake of the analysis we may think that the DR process is actually rounding $z$ variables corresponding to $\epsilon$-subinterval under the additional assumption that rounding within individual facilities is done before rounding between facilities.
Formally, we replace the vector $Y=(Y_1,Y_2,\dots,Y_m)$ by an equivalent vector of random variables $Z=(Z_1,Z_2,\dots,Z_{\nicefrac{k}{\epsilon}})$. Random variable $Z_r$ represents the $r$-th $\epsilon$-subinterval. We will use the following notation to describe the bundles of $\epsilon$-subintervals that correspond to particular facilities:
\begin{align}
&\submax(0) = 0 \qquad \text{and} \qquad  \submax(i) = \submax(i-1) + \frac{y_i^*}{\epsilon}, \label{def:submax} \\
&\sub(i) = \{\submax(i-1)+1, \dots, \submax(i) \}.\label{def:sub} 
\end{align}
Intuitively, $\sub(i)$ is the set of indexes $r$ such that $Z_r$ represents an interval belonging to the $i$-th facility. Examples for both definitions are shown in Figure~\ref{fig:ordering} in the upper level. Formally, the random variables $Z_r$ are defined so that:
\begin{eqnarray}
Y_i = \sum_{r \in \sub(i)} Z_r \qquad \text{and} \qquad Y_i = 1 \implies \exists ! \; r \in \sub(i) \quad Z_r = 1. \label{eq:defZr}
\end{eqnarray}

\noindent
For each $r \in \{1,2,\dots, \nicefrac{k}{\epsilon}\}$ we can write that:
\begin{equation}\label{eq:pr_zr1}
\prob{Z_r=1} = \prob{Z_r=1 \big| Y_{\sub^{-1}(r)}=1} \cdot \prob{Y_{\sub^{-1}(r)}=1} = \frac{\epsilon}{y_{\sub^{-1}(r)}^*} \cdot y_{\sub^{-1}(r)}^* = \epsilon 
\end{equation}
and $\prob{Z_r=0} = 1 - \epsilon$, hence $\expected{Z_r} = \epsilon$.
Also we have:
\begin{equation}\label{eq:pr_sumZr}
\probBigPar{Y_i = 1} = \probBigPar{\sum_{r \in \sub(i)} Z_r = 1} = \probBigPar{\bigvee_{r \in \sub(i)} Z_r = 1} = \sum_{r \in \sub(i)} \probBigPar{Z_r = 1}.
\end{equation}

\noindent
When $Y_i = 1$ its representative is chosen randomly among  $(Z_r)_{r \in \sub(i)}$ independently of the choices of representatives of other facilities. Therefore
\begin{equation}\label{eq:conditioning_Yi_Zr}
\forall_{i \in [m]} \quad \forall_{r \in \sub(i)} \quad \expectedBigPar{f(Y) \;|\; Y_i=1} = \expectedBigPar{f(Y) \;|\; Y_i=1 \;\wedge\; Z_r=1},
\end{equation}
for any function $f$ on vector $Y = (Y_1,Y_2, \dots, Y_m)$.

\noindent
Now we are ready to analyze the expected cost for any client $D_j \in \Dcal$.

\begin{eqnarray}
\expected{\cost_j(Y)} \hspace{-0.5cm} &\leq& \Sum_{i=1}^{m} \: \left( \expectedBigPar{\mathlarger{\frac{c_{ij}}{1+\sum_{i'=1}^{i-1} Y_{i'}}} \Bigg| Y_i = 1} \cdot \probBigPar{Y_i=1} \right)\nonumber\\
&\stackrel{\eqref{eq:pr_sumZr}}{=}& \Sum_{i=1}^{m} \: \left( c_{ij} \cdot \expectedBigPar{\mathlarger{\frac{1}{1+\sum_{i'=1}^{i-1} Y_{i'}}} \Bigg| Y_i = 1} \cdot \Sum_{r \in \sub(i)} \probBigPar{Z_r=1} \right) \nonumber\\
&=& \Sum_{i=1}^{m} \: \left( c_{ij} \cdot \Sum_{r \in \sub(i)} \expectedBigPar{\mathlarger{\frac{1}{1+\sum_{i'=1}^{i-1} Y_{i'}}} \Bigg| Y_i = 1} \cdot \probBigPar{Z_r=1} \right) \nonumber\\
&\stackrel{\eqref{eq:conditioning_Yi_Zr}}{=}& \Sum_{i=1}^{m} \left( c_{ij} \cdot \Sum_{r \in \sub(i)} \expectedBigPar{\mathlarger{\frac{1}{1+\sum_{i'=1}^{i-1} Y_{i'}}} \Bigg| Y_i = 1 \wedge Z_r = 1} \cdot \probBigPar{Z_r = 1} \right) \nonumber\\
&\stackrel{\eqref{eq:defZr},\eqref{eq:pr_zr1}}{=}& \Sum_{i=1}^{m} \left( \epsilon \cdot c_{ij} \cdot \Sum_{r \in \sub(i)} \expectedBigPar{\mathlarger{\frac{1}{1+\sum_{r'=1}^{\submax(i-1)} Z_{r'}}} \Bigg| Z_r = 1} \right) \nonumber\\
&\stackrel{\eqref{eq:defZr}}{=}& \Sum_{i=1}^{m} \left( \epsilon \cdot c_{ij} \cdot \Sum_{r \in \sub(i)} \expectedBigPar{\mathlarger{\frac{1}{1+\sum_{r'=1}^{r-1} Z_{r'}}} \Bigg| Z_r = 1} \right)\label{ineq:ecostjy}
\end{eqnarray}
W.l.o.g., assume that $\OPTLP_j > 0$. Hence the approximation ratio for any client $D_j$ is
\begin{eqnarray*}
 \frac{\expected{\cost_j(Y)}}{\OPTLP_j} \stackrel{\eqref{def:sub},\eqref{ineq:ecostjy}}{\leq} \frac{\Sum_{r=1}^{\nicefrac{k}{\epsilon}} \epsilon \cdot c_{\sub^{-1}(r),j} \cdot \expectedBigPar{\mathlarger{\frac{1}{1+\sum_{r'=1}^{r-1} Z_{r'}}} \bigg| Z_r = 1}}{\Sum_{r=1}^{\nicefrac{k}{\epsilon}} \epsilon \cdot c_{\sub^{-1}(r),j} \cdot \frac{1}{\lceil r\epsilon \rceil} } =
\end{eqnarray*}
note that $\sub^{-1}(r)$ is an index of a facility that contains $Z_r$. Now we convert the sum over facilities into a sum over unit intervals. A unit interval is represented as a sum of $\nicefrac{1}{\epsilon}$ many $\epsilon$-subintervals:
\begin{eqnarray*}
&=& \frac{\Sum_{\ell=1}^{k} \; \Sum_{r=\nicefrac{(\ell-1)}{\epsilon} + 1}^{\nicefrac{\ell}{\epsilon}} c_{\sub^{-1}(r),j} \cdot \expectedBigPar{\mathlarger{\frac{1}{1+\sum_{r'=1}^{r-1} Z_{r'}}} \bigg| Z_r = 1}}{\Sum_{\ell=1}^{k} \; \Sum_{r=\nicefrac{(\ell-1)}{\epsilon} + 1}^{\nicefrac{\ell}{\epsilon}} c_{\sub^{-1}(r),j} \cdot \frac{1}{\ell}} \leq \nonumber
\end{eqnarray*}
W.l.o.g., we can assume that first interval has non-zero costs: $\sum_{r=1}^{\nicefrac{1}{\epsilon}} c_{\sub^{-1}(r),j} > 0$, otherwise the LP pays $0$ and our algorithm pays $0$ in expectation on intervals from non-empty prefix of $(1,2,\dots,k)$. With this assumption we can take maximum over intervals:
\begin{eqnarray*}
&\stackrel{\text{Lemma \ref{lem:sai_over_sbi_leq_max_ai_over_bi}}}{\leq}& \max_{\ell \in [k]} \left( \frac{\Sum_{r=\nicefrac{(\ell-1)}{\epsilon} + 1}^{\nicefrac{\ell}{\epsilon}} c_{\sub^{-1}(r),j} \cdot \expectedBigPar{\mathlarger{\frac{1}{1+\sum_{r'=1}^{r-1} Z_{r'}}} \bigg| Z_r = 1}}{\Sum_{r=\nicefrac{(\ell-1)}{\epsilon} + 1}^{\nicefrac{\ell}{\epsilon}} c_{\sub^{-1}(r),j} \cdot \frac{1}{\ell}} \right) \leq \nonumber
\end{eqnarray*}
Costs $c_{\sub^{-1}(r),j}$ can be general and they could be hard to analyze. Therefore we would like to remove costs from the analysis. We will use Lemma \ref{lem:ca_over_cb_leq_ab} for which the technique of splitting variables $Y_i$ into $Z_r$ was needed. We are using the fact that the variables $Z_r$ have the same expected values; otherwise the coefficient in front of the expected value would be $c_{ij} \cdot y_i^*$, i.e., not monotonic. Thus
\begin{eqnarray}
&\stackrel{\text{Lemma \ref{lem:ca_over_cb_leq_ab}}}{\leq}& \max_{\ell \in [k]} \; \left( \epsilon \cdot \ell \cdot \Sum_{r=\nicefrac{(\ell-1)}{\epsilon} + 1}^{\nicefrac{\ell}{\epsilon}} \; \expectedBigPar{\mathlarger{\frac{1}{1+\sum_{r'=1}^{r-1} Z_{r'}}} \bigg| Z_r = 1} \right) \label{ineq:approx_ratio}.
\end{eqnarray}
Consider the expected value in the above expression for a fixed $r \in \{\nicefrac{(\ell-1)}{\epsilon} + 1,\dots, \nicefrac{\ell}{\epsilon} \}$:
\begin{eqnarray}\label{eq:Er}
E_r &=& \expectedBigPar{\mathlarger{\frac{1}{1+\sum_{r'=1}^{r-1} Z_{r'}}} \bigg| Z_r = 1} = \Sum_{t=1}^{k} \frac{1}{t} \probBigPar{\Sum_{r'=1}^{r-1} Z_{r'}=t-1 \bigg| Z_{r} = 1} = \nonumber\\
&=& \Sum_{t=1}^{\ell} \frac{1}{t} \probBigPar{\Sum_{r'=1}^{r-1} Z_{r'}=t-1 \bigg| Z_{r} = 1} + \Sum_{t=\ell+1}^{k} \frac{1}{t} \probBigPar{\Sum_{r'=1}^{r-1} Z_{r'}=t-1 \bigg| Z_{r} = 1}.
\end{eqnarray}
For $t \in \{1,2,\dots,\ell\}$ we consider the conditional probability in the above expression, denote it by $p_r(t-1)$, and analyze the corresponding cumulative distribution function $H_r(t-1)$:
\begin{eqnarray}
p_r(t-1) &=& \probBigPar{\Sum_{r'=1}^{r-1} Z_{r'}=t-1 \bigg| Z_{r} = 1}, \label{eq:pz}\\
H_r(t-1) &=& \probBigPar{\Sum_{r'=1}^{r-1} Z_{r'} \leq t-1 \bigg| Z_{r} = 1} = \sum_{t'=0}^{t-1} p_r(t'), \label{eq:Hr_def}
\end{eqnarray}
We continue the analysis of $E_r$:
\begin{eqnarray}
E_r &\stackrel{\eqref{eq:Er},\eqref{eq:pz}}{=}& \Sum_{t=1}^{\ell} \frac{1}{t} p_r(t-1) + \Sum_{t=\ell+1}^{k} \frac{1}{t} p_r(t-1)  \nonumber\\
&\stackrel{\eqref{eq:Hr_def}}{=} & H_r(0) + \Sum_{t=2}^{\ell} \frac{1}{t} \left( H_r(t-1) - H_r(t-2) \right) + \Sum_{t=\ell+1}^{k} \frac{1}{t} p_r(t-1)  \nonumber\\
&=& H_r(0) + \Sum_{t=2}^{\ell} \frac{1}{t} H_r(t-1) - \Sum_{t=2}^{\ell} \frac{1}{t} H_r(t-2) + \Sum_{t=\ell+1}^{k} \frac{1}{t} p_r(t-1)  \nonumber\\
&=& \Sum_{t=1}^{\ell} \frac{1}{t} H_r(t-1) - \Sum_{t=1}^{\ell-1} \frac{1}{t+1} H_r(t-1) +\Sum_{t=\ell+1}^{k} \frac{1}{t} p_r(t-1)  \nonumber\\
&=& \Sum_{t=1}^{\ell-1} \frac{1}{t} H_r(t-1) - \Sum_{t=1}^{\ell-1} \frac{1}{t+1} H_r(t-1) + \frac{1}{\ell} H_r(\ell-1) +\Sum_{t=l+1}^{k} \frac{1}{t} p_r(t-1) \nonumber\\
&\leq& \Sum_{t=1}^{\ell-1} \left(\frac{1}{t}-\frac{1}{t+1}\right) H_r(t-1) + \frac{1}{\ell} \left( H_r(\ell-1) +\Sum_{t=\ell+1}^{k} p_r(t-1) \right) \nonumber\\
&=& \Sum_{t=1}^{\ell-1} \frac{1}{t(t+1)} H_r(t-1) + \frac{1}{\ell} \left( H_r(\ell-1) +\Sum_{t=\ell+1}^{k} p_r(t-1) \right) \nonumber\\
&\leq& \Sum_{t=1}^{\ell-1} \frac{1}{t(t+1)} H_r(t-1) + \frac{1}{\ell}.\label{ineq:Er}
\end{eqnarray}

\begin{lemma}\label{lem:Hr_chernoff}
For any $\ell \in [k]$, $t \in [\ell-1]$ and $r \in \{\nicefrac{(\ell-1)}{\epsilon}+1,\nicefrac{(\ell-1)}{\epsilon}+2,\dots,\nicefrac{\ell}{\epsilon}\}$ we have
\[ H_r(t-1) \leq e^{-r \cdot \epsilon} \cdot \left(\frac{e \cdot r \cdot \epsilon}{t} \right)^t. \]
\end{lemma}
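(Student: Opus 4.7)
The plan is to obtain this Chernoff-type bound by running the standard exponential-moment argument on $X := \sum_{r'=1}^{r-1} Z_{r'}$ conditioned on $Z_r = 1$. Three ingredients drive the proof: (i) by BNA, the variables $(Z_{r'})_{r' \neq r}$ remain negatively associated conditional on $Z_r = 1$, so the conditional MGF factorizes; (ii) each conditional marginal $p_{r'}^* := \Pr[Z_{r'} = 1 \mid Z_r = 1]$ satisfies $p_{r'}^* \le \epsilon$; and crucially, (iii) the deterministic identity $\sum_{r'} Z_{r'} = k$ yields the \emph{lower} bound $\mu^* := \sum_{r'=1}^{r-1} p_{r'}^* \ge r\epsilon - 1$. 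Fact (ii) follows from NA of the $Y_i$'s: if $r, r' \in \sub(i)$ then $p_{r'}^* = 0$, and otherwise, with $r \in \sub(i)$ and $r' \in \sub(i')$ for distinct $i, i'$, one has $\Pr[Z_{r'} = 1 \wedge Z_r = 1] = \Pr[Y_{i'} = 1 \wedge Y_i = 1] \cdot (\epsilon/y_{i'}^*)(\epsilon/y_i^*) \le y_{i'}^* y_i^* \cdot \epsilon^2/(y_{i'}^* y_i^*) = \epsilon^2$, and dividing by $\Pr[Z_r=1]=\epsilon$ gives $p_{r'}^* \le \epsilon$. For (iii), conditioning on $Z_r = 1$ forces $\sum_{r' \ne r} p_{r'}^* = k - 1$; bounding the ``after $r$'' portion using (ii) gives $\sum_{r' > r} p_{r'}^* \le (k/\epsilon - r)\epsilon = k - r\epsilon$, leaving $\mu^* \ge (k - 1) - (k - r\epsilon) = r\epsilon - 1$.

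With these in place, Markov's inequality applied to $e^{-sX}$, followed by BNA-factorization of the conditional MGF, then $1 + x \le e^x$, and finally the lower bound $\mu^* \ge r\epsilon - 1$ (using $1 - e^{-s} \ge 0$), yields for any $s > 0$:
\[
H_r(t-1) \;\le\; e^{s(t-1) - (r\epsilon - 1)(1 - e^{-s})}.
\]
Optimizing over $s$ (the minimum is attained at $s = \log\bigl((r\epsilon - 1)/(t - 1)\bigr)$, which is valid since $t \le \ell - 1 < r\epsilon$) then yields $H_r(t-1) \le \bigl((r\epsilon - 1)/(t-1)\bigr)^{t-1} e^{t - r\epsilon}$.

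The main remaining obstacle is the calculus inequality $\bigl((r\epsilon - 1)/(t - 1)\bigr)^{t-1} \le (r\epsilon/t)^t$ for $1 < t < r\epsilon$, which will upgrade the previous bound to the claimed $e^{-r\epsilon}(e r\epsilon/t)^t$. I would prove it by defining $g(x) := (x-1)\log\bigl((r\epsilon - 1)/(x - 1)\bigr) - x\log(r\epsilon/x)$, verifying $g(r\epsilon) = 0$, and computing $g'(x) = \log\bigl(x(r\epsilon - 1)/(r\epsilon(x - 1))\bigr) > 0$ on $(1, r\epsilon)$ (since $x(r\epsilon - 1) - r\epsilon(x - 1) = r\epsilon - x > 0$ there), so $g$ is strictly increasing there and $g(t) < 0$. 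The boundary case $t = 1$ is handled by taking $s \to \infty$ in the preceding bound and reduces to $r\epsilon \ge 1$, which holds in the stated range since $\ell \ge 2$ there.
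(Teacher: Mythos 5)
Your approach is genuinely different from the paper's, and the calculus all checks out, but there is a gap in the key probabilistic step. In your point (i) you assert that ``by BNA, the variables $(Z_{r'})_{r'\neq r}$ remain negatively associated conditional on $Z_r = 1$,'' and you then use this to factorize the conditional MGF. That is precisely the Conditional Negative Association (CNA) property, which is strictly stronger than the Binary Negative Association (BNA) that Lemma~\ref{lem:bin_neg_asoc_Zr} establishes for the $Z$'s. BNA is a statement about \emph{unconditional} covariances of binary monotone functions on disjoint index sets; it says nothing by itself about the joint behavior of the remaining variables after conditioning on one of them. (Kramer et al. do prove CNA for tournament-tree DR, so the gap could be closed by invoking that result --- but then you would be using a strictly stronger hypothesis than the paper does, which runs counter to one of the paper's stated points, namely that BNA already suffices.) Your ingredients (ii) and (iii) are fine: the bound $p_{r'}^* \le \epsilon$ via NC of the $Y_i$'s and independence of the within-facility representatives, and the lower bound $\mu^* \ge r\epsilon - 1$ via sum-preservation, are both correct.

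The paper's proof sidesteps the conditional distribution entirely. It first uses sum-preservation to rewrite $\{ \sum_{r'<r} Z_{r'} \le t-1 \} \cap \{ Z_r = 1 \}$ as $\{ \sum_{r'>r} Z_{r'} \ge k-t \} \cap \{ Z_r = 1 \}$, applies BNA once with $f = \mathbbm{1}\big[\sum_{r'>r} Z_{r'} \ge k-t\big]$ and $g(Z_r) = Z_r$ to decouple $Z_r$, and thereby bounds $H_r(t-1)$ by the \emph{unconditional} probability $\Pr\big[\sum_{r'\le r} Z_{r'} \le t\big]$. The standard Chernoff bound for negatively correlated variables then applies directly with mean $r\epsilon$. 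Besides requiring only BNA, this route also gives the exponent with $r\epsilon$ and $t$ rather than $r\epsilon - 1$ and $t-1$, so your final calculus inequality $\bigl((r\epsilon-1)/(t-1)\bigr)^{t-1} \le (r\epsilon/t)^t$ --- which you do prove correctly --- becomes unnecessary. In short: the argument can be made to work, but only under CNA, and the decondition-then-Chernoff route the paper takes is both cleaner and uses a weaker hypothesis.
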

The proof of Lemma~\ref{lem:Hr_chernoff} combines the use of the BNA property of variables $\{Z_1,Z_2, \dots, Z_{\nicefrac{k}{\epsilon}}\}$ with applications of Chernoff-Hoeffding bounds. Due to the space constraints, the proof is moved to the Appendix~\ref{appen:omitted_:main_approximation}. In the end, we get the following bound on the approximation ratio.
\begin{lemma}\label{lem:final_apx_ratio}
For any $j \in [n]$ we have
 \[ \frac{\expected{\cost_j(Y)}}{\OPTLP_j} \leq \ratiohkm. \]
\end{lemma}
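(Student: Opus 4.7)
\medskip\noindent\textbf{Proof proposal.} The plan is to combine the already-derived bound~\eqref{ineq:approx_ratio} with the $E_r$ bound~\eqref{ineq:Er} and the Chernoff-type estimate of Lemma~\ref{lem:Hr_chernoff}, then pass to a Riemann-sum limit as $\epsilon \to 0$, reducing the problem to a closed-form expression in $\ell$ whose maximum can be bounded numerically.

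First, starting from~\eqref{ineq:approx_ratio} and substituting~\eqref{ineq:Er}, the ratio is at most
\[
\max_{\ell \in [k]} \; \epsilon \cdot \ell \cdot \Sum_{r=\nicefrac{(\ell-1)}{\epsilon}+1}^{\nicefrac{\ell}{\epsilon}} \left( \Sum_{t=1}^{\ell-1} \frac{1}{t(t+1)} H_r(t-1) + \frac{1}{\ell} \right).
\]
The contribution of the $\frac{1}{\ell}$ summand is exactly $\epsilon \cdot \ell \cdot \frac{1}{\epsilon} \cdot \frac{1}{\ell} = 1$, because the inner sum has $\nicefrac{1}{\epsilon}$ terms. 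Hence the ratio is bounded by
\[
1 + \max_{\ell \in [k]} \; \epsilon \cdot \ell \cdot \Sum_{r=\nicefrac{(\ell-1)}{\epsilon}+1}^{\nicefrac{\ell}{\epsilon}} \Sum_{t=1}^{\ell-1} \frac{1}{t(t+1)} H_r(t-1).
\]

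Next, I would invoke Lemma~\ref{lem:Hr_chernoff} to replace $H_r(t-1)$ by $e^{-r\epsilon}(er\epsilon/t)^t$. After this substitution the summand depends on $r$ only through the product $x := r\epsilon$. Since $\epsilon$ is a parameter of the analysis (not of the algorithm) and can be chosen arbitrarily small, the inner sum $\epsilon \sum_{r=\nicefrac{(\ell-1)}{\epsilon}+1}^{\nicefrac{\ell}{\epsilon}} e^{-r\epsilon}(er\epsilon/t)^t$ is a Riemann sum that converges to $\int_{\ell-1}^{\ell} e^{-x}(ex/t)^t\,dx$. (Formally: take $\epsilon \to 0$ along the values for which $1/\epsilon$ and all $y_i^*/\epsilon$ are integers; the LP solution is rational so such $\epsilon$ may be chosen arbitrarily small.) This yields the closed-form bound
\[
\frac{\expected{\cost_j(Y)}}{\OPTLP_j} \;\leq\; 1 + \max_{\ell \geq 1} \; \ell \Sum_{t=1}^{\ell-1} \frac{1}{t(t+1)} \int_{\ell-1}^{\ell} e^{-x}\left(\frac{ex}{t}\right)^t dx.
\]

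The remaining task is to certify that this maximum is at most $\ratiohkm - 1 \approx 1.3589$. For each fixed $\ell$ the inner integrals can be written in terms of incomplete Gamma values $\int_{\ell-1}^{\ell} e^{-x} x^t\,dx = \gamma(t+1,\ell) - \gamma(t+1,\ell-1)$, giving an explicit function $f(\ell)$ to evaluate. For $\ell=1$ the sum is empty and $f(1)=1$; for $\ell=2,3,\dots$ the values can be computed directly. The main obstacle is controlling the supremum over all $\ell$: one must argue that $f(\ell)$ is eventually decreasing (intuitively the integrand $e^{-x}(ex/t)^t$ on $[\ell-1,\ell]$ is dominated by its Poisson-tail behavior, so for $t \ll \ell$ the integral decays super-polynomially and the series truncates effectively around $t \approx \ell$), and then verify numerically that $f(\ell) \leq \ratiohkm$ for the finitely many remaining values of $\ell$. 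Locating and certifying the worst-case $\ell$ (and providing a rigorous tail bound that eliminates all larger $\ell$) is the delicate part of the proof.
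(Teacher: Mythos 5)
Your derivation up to the expression
\[
1 + \max_{\ell}\;\ell\sum_{t=1}^{\ell-1}\frac{1}{t(t+1)}\int_{\ell-1}^{\ell}e^{-x}\Bigl(\frac{ex}{t}\Bigr)^t\,dx
\]
is essentially the same as the paper's, with one cosmetic difference: where you pass to a Riemann-sum limit as $\epsilon\to 0$ (which does work, since the quantity $\expected{\cost_j(Y)}/\OPTLP_j$ is a fixed number and the bound holds for every admissible~$\epsilon$), the paper instead observes directly that $f_t(x)=e^{-x}x^t$ satisfies $f_t'(x)=e^{-x}x^{t-1}(t-x)\le 0$ for $x\ge t$, so for $t\le\ell-1$ the right-endpoint boxes underestimate the integral and one gets $\epsilon\sum_{r}e^{-r\epsilon}(r\epsilon)^t\le\int_{\ell-1}^{\ell}e^{-x}x^t\,dx$ for the actual $\epsilon$ at hand, with no limit needed. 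Both give the same intermediate bound.

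The genuine gap is what you flag as ``the delicate part'' and then leave undone: certifying $\max_\ell\bigl(1+\ell\sum_{t=1}^{\ell-1}\tfrac{1}{t(t+1)}\int_{\ell-1}^{\ell}e^{-x}(ex/t)^t\,dx\bigr)\le\ratiohkm$. This is not a small finishing touch but the substantive content of the lemma---it is precisely where the constant $\ratiohkm$ comes from. You correctly guess that the function of $\ell$ has a single hump and is eventually decreasing, but you offer no rigorous tail bound. The paper handles this concretely: it evaluates the expression numerically for $\ell\in\{1,\dots,88\}$ (the maximum being at $\ell=4$), and for $\ell\ge 89$ it first relaxes $\int_{\ell-1}^{\ell}e^{-x}x^t\,dx\le e^{-(\ell-1)}\ell^t$, then applies Stirling in the form $e^t/t^t\le\sqrt{2\pi t}\,e^{1/(12t)}/t!$, and finally bounds the resulting sum $\sum_{t}\ell^{t+1}/(t+1)!\cdot(\text{lower-order factors})$ by the Taylor series of $e^\ell$, arriving at $1+3\sqrt{2\pi}\,e^{13/12}/\sqrt{\ell}<2.3551<\ratiohkm$ for all $\ell\ge 89$. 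Without some explicit tail estimate of this kind, the $\max_\ell$ is over an unbounded range and cannot be certified by finite numerical checking alone, so your proposal does not actually establish the claimed constant.
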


A proof uses inequalities~\eqref{ineq:approx_ratio}, \eqref{ineq:Er} as well as Lemma~\ref{lem:Hr_chernoff} with an upper bound derived by an integral of the function $f_t(x) = e^{-x}$. We made numerical calculation for $\ell \in \{1,2,\dots,88\}$ and for other case we used Stirling formula and Taylor series for $e^\ell$ to derive analytical upper bound. 
Full proof, including a plot of numericaly obtained values, is presented in the Appendix~\ref{appen:omitted_:main_approximation}.

\section{\probowakmedian with Costs Satisfying the Triangle Inequality}\label{sec:reduction_to_fault_tol}

In this section we construct an algorithm for \probowakmedian with costs satisfying the triangle inequality. Thus, the problem we address in this section is more general than \probharkmedian (i.e., the problem  we have considered in the previous section) in a sense that we allow for arbitrary non-increasing sequences of weights. On the other hand, it is less general in a sense that we require the costs to form a specific structure (a metric). 

In our approach we first adapt the algorithm of Hajiaghayi~et~al.~\cite{HajiaghayiHLLS16} for \probftkmed so that it applies to the following, slightly more general setting: for each client $D_j$ we introduce its multiplicity $m_j \in \mathbb{N}$---intuitively, this corresponds to cloning $D_j$ and co-locating all such clones in the same location as $D_j$. However, this will require a modification of the original algorithm for \probftkmed, since we want to allow the multiplicities $\{m_j\}_{D_j \in \mathcal{D}}$ to be exponential with respect to the size of the instance (otherwise, we could simply copy each client a sufficient number of times, and use the original algorithm of Hajiaghayi~et~al.). 

Next, we provide a reduction from \probowakmedian to such a generalization of \probftkmed. The resulting \probftkmedmulti problem can be cast as the following integer program:\\
 \begin{minipage}[c]{0.43\columnwidth}
 \hspace{-1cm}
 \begin{align}
 \text{min} \quad \sum_{j=1}^{n} &\sum_{i=1}^{m} \: m_j \cdot x_{ij} \cdot c_{ij} &\nonumber\\
  &\sum_{i=1}^m y_i = k &\nonumber
 \end{align}
\end{minipage}
\begin{minipage}[c]{0.56\columnwidth}
 \begin{align}
 \hspace{0.0\columnwidth} \sum_{i=1}^{m} x_{ij} &= r_j \qquad\;\: &\forall j \in [n]\nonumber\\
 x_{ij} &\le y_i \qquad\;\: &\forall i \in [m], \; j \in [n]\nonumber\\
 y_i, x_{ij} &\in \{0,1\} \quad &\forall i \in [m]\nonumber\\
 m_j &\in \naturals \qquad\;\: &\forall j \in [n] \nonumber
 \end{align}
\end{minipage}

\newpage
\begin{theorem}\label{thm:93-apx-for-multi}
 There is a polynomial-time $93$-approximation algorithm for \probmetricftkmedmulti.
\end{theorem}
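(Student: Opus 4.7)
The plan is to adapt the rounding-based $93$-approximation of Hajiaghayi et al.~\cite{HajiaghayiHLLS16} for \probftkmed to the multiplicity-weighted setting, relying on the observation that multiplicities enter the problem only as positive scalar coefficients in the objective. First I would take the natural LP relaxation obtained by replacing $x_{ij}, y_i \in \{0,1\}$ with $x_{ij}, y_i \in [0,1]$. The feasible region is independent of the values $m_j$, so it coincides with the LP used in~\cite{HajiaghayiHLLS16}; only the objective is reweighted. In particular, the LP can be solved in polynomial time even when each $m_j$ is provided in binary encoding, which is essential because the reduction from \probowakmedian developed in this section will generate multiplicities that are exponentially large in the input size.

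Next I would run the Hajiaghayi et al.\ rounding procedure essentially verbatim on the optimal LP vector $(x^*, y^*)$. Their analysis is naturally local: for every individual client $D_j$, the expected connection cost to its $r_j$ closest open facilities in the rounded integral solution is at most $93 \cdot \sum_{i=1}^m x_{ij}^* c_{ij}$. Since multiplicities influence neither the rounding procedure nor this per-client guarantee, one can multiply through by $m_j$ and sum over $j$ to obtain
\begin{equation*}
 \expectedBigPar{\sum_{j=1}^n m_j \sum_{i=1}^m X_{ij} c_{ij}} \;\leq\; 93 \sum_{j=1}^n m_j \sum_{i=1}^m x_{ij}^* c_{ij} \;\leq\; 93 \cdot \OPT,
\end{equation*}
where $(X_{ij})$ denotes the integral assignment produced by the rounding; the second inequality uses LP feasibility of any optimal integral solution.

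The main obstacle, and essentially the only nontrivial point of the extension, is making sure that the entire algorithm runs in time polynomial in $\sum_j \log m_j$ rather than $\sum_j m_j$. This rules out the naive approach of replacing each client $D_j$ by $m_j$ colocated copies and invoking the original algorithm as a black box. However, because the LP, the rounding, and the per-client analysis all treat $m_j$ purely as a scalar coefficient in the objective, none of them ever requires enumerating or processing individual copies; the only care needed is to verify that each arithmetic operation in the rounding scheme is performed on numbers whose bit-length remains polynomial. Granted this bookkeeping, the displayed inequality combined with the polynomial-time LP solver and polynomial-time rounding yields the claimed $93$-approximation in expectation for \probmetricftkmedmulti.
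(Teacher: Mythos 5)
Your proposal hinges on the assertion that the Hajiaghayi et al.\ analysis is ``naturally local,'' i.e., that the rounding produces an integral solution with
$\expected{\cost_j(Y)} \leq 93 \cdot \sum_{i=1}^m x_{ij}^* c_{ij}$ for \emph{every individual} client $D_j$, with no cross-client charging. This is the crux, and it is not justified; the paper's own proof strongly suggests it is false. Their algorithm contains a filtering phase in which some ``dangerous'' clients are discarded and their eventual connection cost is bounded via the LP cost of a nearby surviving client together with a distance term controlled by the filtering criterion. That is precisely a cross-client charging argument: the bound for a filtered client $D_j$ is stated in terms of another client $D_{j'}$'s LP cost, not in terms of $\sum_i x_{ij}^* c_{ij}$ alone. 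In a weighted instance where $m_j$ is large but $m_{j'}$ is small, such a charge would be scaled up by $m_j$ but the resource being charged against is scaled only by $m_{j'}$, so ``multiply through by $m_j$ and sum'' does not go through.

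The paper's proof addresses exactly this difficulty by a different route: replace each client $D_j$ by $m_j$ co-located clones (so that whenever the analysis charges a filtered clone to a surviving client, both belong to the same clone cluster and carry the same weight), and then argue that the clone cluster moves through the bundle-creation, safe/dangerous classification, conflict-filtering, and laminar-family phases as a single unit, so the whole algorithm on the exponentially large cloned instance can be simulated in polynomial time on a compact encoding. Your observation that the LP can be solved directly on the reweighted objective and your warning about exponential $m_j$ are both correct and consistent with the paper, but the step where you invoke a per-client bound and sum is exactly the step the cloning construction is designed to replace. To make your approach work you would need to either prove that the Hajiaghayi et al.\ guarantee does hold client-by-client (contrary to what the structure of their argument suggests), or reproduce the cloning/compact-simulation argument that the paper gives.
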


Proof can be found in the Appendix~\ref{sec:apxD}.

\medskip

Consider reduction from \probowakmedian to \probftkmedmulti depicted on Figure~\ref{fig:reduction}.

\begin{figure}

\begin{framed}
 \begin{reduction}
Let us take an instance $I$ of \probowakmedian $\big(\Dcal ,\Fcal ,k ,w , \{c_{ij}\}_{F_i \in \Fcal, D_j \in \Dcal} \big)$ where $w_i = \frac{p_i}{q_i}, i \in [k]$ are rational numbers in the canonical form. We construct an instance $I'$ of \probftkmedmulti with the same set of facilities and the same number of facilities to open, $k$. Each client $D_j \in \Dcal$ is replaced with clients $D_{j,1}, D_{j,2}, \dots, D_{j,k}$ with requirements $1,2, \dots, k$, respectively. For $Q = \prod_{r=1}^{k} q_r$, the multiples of the clients are defined as follows:
\begin{itemize}
  \item $m_{j,\ell} = (w_\ell - w_{\ell+1}) \cdot Q$, for each $\ell \in [k-1]$, and
  \item $m_{j,k} = w_k \cdot Q$.
 \end{itemize}
 \end{reduction}
\end{framed}
\caption{Reduction from \probowakmedian to \probftkmedmulti.} \label{fig:reduction}
\end{figure}

\begin{lemma}\label{lem:costftkmi-gen}
Let $I$ be an instance of \probowakmedian, and let $I'$ be an instance of \probftkmedmulti constructed from $I$ through reduction from Figure~\ref{fig:reduction}. An $\alpha$-approximate solution to $I'$ is also an $\alpha$-approximate solution to $I$.
\end{lemma}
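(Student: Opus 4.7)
\medskip

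\textbf{Proof plan.} The core observation is that for every candidate set $C \subseteq \Fcal$ of $k$ facilities, the objective of $I'$ evaluated at $C$ equals a fixed positive multiple of the objective of $I$ at $C$; once this is established, preserving approximation ratios is immediate. The plan is therefore to fix an arbitrary $C$, write out $\cost_{I'}(C)$ using the reduction, and collapse it to $Q \cdot w(C)$ via a telescoping (Abel) rearrangement.

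First I would argue that for each original client $D_j$ and each $\ell \in [k]$, since $D_{j,\ell}$ is co-located with $D_j$ and has connectivity requirement $\ell$, its contribution to the \probftkmedmulti cost on solution $C$ is exactly $m_{j,\ell} \cdot \sum_{r=1}^{\ell} c_r^{\rightarrow}(C,j)$, where $c_r^{\rightarrow}(C,j)$ denotes the distance from $D_j$ to the $r$-th closest facility in $C$. Summing over $j$ and $\ell$ gives
\begin{equation*}
\cost_{I'}(C) \;=\; \sum_{j=1}^{n} \sum_{\ell=1}^{k} m_{j,\ell} \sum_{r=1}^{\ell} c_r^{\rightarrow}(C,j).
\end{equation*}

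Next I would swap the order of the two inner summations and use the explicit values of $m_{j,\ell}$ prescribed by the reduction to telescope the coefficient of $c_r^{\rightarrow}(C,j)$:
\begin{equation*}
\sum_{\ell=r}^{k} m_{j,\ell} \;=\; Q\!\left(\sum_{\ell=r}^{k-1}(w_\ell - w_{\ell+1}) + w_k\right) \;=\; Q\, w_r.
\end{equation*}
Plugging this back and recognizing the definition of $w(C)$ yields
\begin{equation*}
\cost_{I'}(C) \;=\; Q \sum_{j=1}^{n} \sum_{r=1}^{k} w_r \, c_r^{\rightarrow}(C,j) \;=\; Q \cdot w(C).
\end{equation*}
Since $Q = \prod_{r=1}^{k} q_r$ is a positive constant independent of $C$, the optimal solutions of $I$ and $I'$ coincide, and the ratio $\cost_{I'}(C)/\cost_{I'}(C^\OPT) = w(C)/w(C^\OPT)$ is preserved. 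Hence an $\alpha$-approximation for $I'$ is an $\alpha$-approximation for $I$.

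The only mildly delicate point—what I would call the main obstacle—is justifying that the reduction is well-defined: we need $m_{j,\ell} \in \naturals$ for every $\ell$. This follows because the weight sequence is non-increasing (so $w_\ell - w_{\ell+1} \ge 0$) and non-negative (so $w_k \ge 0$), while $Q$ was chosen precisely as a common denominator of the rationals $w_\ell = p_\ell/q_\ell$, making $(w_\ell - w_{\ell+1})Q$ and $w_k Q$ non-negative integers. I would state this as a short remark before the main computation, so that the subsequent algebraic identity goes through unchanged.
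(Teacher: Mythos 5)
Your proof is correct and takes essentially the same approach as the paper: both arguments establish the per-client identity $\cost_{I'}(C,j)=Q\cdot w(C,j)$ by swapping the order of the double sum and telescoping $\sum_{\ell\ge r}m_{j,\ell}=Q\,w_r$, and then conclude that the constant factor $Q$ preserves approximation ratios. Your added remark that $m_{j,\ell}\in\naturals$ (well-definedness of the reduction) is a sensible clarification that the paper leaves implicit.
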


Proof can be found in the Appendix~\ref{sec:apxD}.



\begin{corollary}\label{cor:93_metricowakmedian}
There exists a $93$-approximation algorithm for \metricprobowakmedian that runs in polynomial time.
\end{corollary}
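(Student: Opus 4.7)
The plan is to combine the reduction of Figure~\ref{fig:reduction} with Theorem~\ref{thm:93-apx-for-multi} in a black-box manner. Given an instance $I$ of \metricprobowakmedian, I would first construct the instance $I'$ of \probftkmedmulti exactly as described in the reduction: the facility set and the integer $k$ are copied verbatim, each client $D_j$ is replaced by $k$ co-located clients $D_{j,1},\ldots,D_{j,k}$ with connectivity requirements $1,2,\ldots,k$, and the multiplicities $m_{j,\ell}$ are set according to the rational weights $w_\ell = p_\ell/q_\ell$ via the formulas involving $Q=\prod_r q_r$.

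Next I would check that $I'$ inherits a metric structure from $I$: since every new client $D_{j,\ell}$ is co-located with the original client $D_j$, the connection costs of $I'$ are a sub-matrix (with repeated rows) of those of $I$, so triangle inequalities still hold. Hence $I'$ is a legitimate instance of \probmetricftkmedmulti, and Theorem~\ref{thm:93-apx-for-multi} delivers, in time polynomial in the bit-size of $I'$, a solution of cost at most $93$ times the optimum of $I'$. Applying Lemma~\ref{lem:costftkmi-gen} with $\alpha=93$ then turns this into a $93$-approximate solution to the original instance $I$.

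The one genuine obstacle is polynomial running time. The multiplicities $m_{j,\ell}$ contain the factor $Q=\prod_{r=1}^k q_r$, which is in general exponential in the encoding length of the weights, so we cannot simply duplicate clients and invoke a standard \probftkmed algorithm; we must read ``polynomial time'' in Theorem~\ref{thm:93-apx-for-multi} as polynomial in the bit-length of the multiplicities, which is precisely why the authors generalized the algorithm of Hajiaghayi et al.\ to handle the multiplicity parameter symbolically rather than by cloning. I would point to this aspect of Theorem~\ref{thm:93-apx-for-multi} to argue that the reduction, together with the approximation algorithm, runs in time polynomial in the size of $I$; the rest of the argument is then just concatenation of the reduction with the algorithm and an invocation of Lemma~\ref{lem:costftkmi-gen}.
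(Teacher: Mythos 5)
Your proof is correct and follows essentially the same route as the paper: apply the reduction of Figure~\ref{fig:reduction}, observe that co-location preserves the metric, invoke Theorem~\ref{thm:93-apx-for-multi}, and transfer the guarantee via Lemma~\ref{lem:costftkmi-gen}. You also correctly flag the one subtle point---the exponentially large multiplicities $m_{j,\ell}$---and correctly explain that this is exactly what Theorem~\ref{thm:93-apx-for-multi} is designed to handle, which is why the corollary follows.
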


\section{Concluding Remarks and Open Questions}\label{sec:conclusions}

We have introduced a new family of $k$-median problems, called \probowakmedian, and we have shown that our problem with the harmonic sequence of weights allows for a constant factor approximation even for general (non-metric) costs. This algorithm applies to Proportional Approval Voting.
In the analysis of our approximation algorithm for \probharkmedian, we used the fact that the dependent rounding procedure satisfies Binary Negative Association. 

We showed that any \metricprobowakmedian can be approximated within a factor of $93$ via a reduction to \probftkmedmulti. We also obtained that \probowakmedian with $p$-geometric weights with $p < \nicefrac{1}{e}$ cannot be approximated without the assumption of the costs being metric. The status of the non-metric problem with $p$-geometric weights with $p > \nicefrac{1}{e}$ remains an intriguing open problem.


Using approximation and randomized algorithms for finding winners of elections requires some comment. First, the multiwinner election rules such as PAV have many applications in the voting theory, recommendation systems and in resource allocation. Using (randomized) approximation algorithms in such scenarios is clearly justified. However, even for other high-stake domains, such as political elections, the use of approximation algorithms is a promising direction. One approach is to view an approximation algorithm as a new, full-fledged voting rule (for more discussion on this, see the works of Caragiannis et al.~\cite{CaragiannisCFHKKPR12,CaragiannisKKP14}, Skowron et al.~\cite{SkowronFS15}, and Elkind et al.~\cite{ElkindFSS17}). 
In fact, the use of randomized algorithms in this context has been advocated in the literature as well---e.g., one can arrange an election where each participant is allowed to suggest a winning committee, and the best out of the suggested committees is selected; in such case the approximation guaranty of the algorithm corresponds to the quality of the outcome of elections (for a more detailed discussion see~\cite{SkowronFS15})~\footnote{Indeed, approximation algorithms for many election rules have been extensively studied in the literature. In the world of single-winner rules, there are already very good approximation algorithms known for
the Kemeny's rule~\cite{AilonCN08,CoppersmithFR10,Kenyon-MathieuS07}
and for the Dodgson's rule~\cite{McCabe-DanstedPS08,HomanH09,CaragiannisCFHKKPR12,FaliszewskiHH11,CaragiannisKKP14}. A hardness of approximation has been proven for the Young's rule~\cite{CaragiannisCFHKKPR12}. For the multiwinner case we know good (randomized) approximation algorithms for Minimax Approval Voting~\cite{CKSS17a}, Chamberlin--Courant rule~\cite{SkowronFS15}, Monroe rule~\cite{SkowronFS15}, or maximization variant of PAV~\cite{SkowronFL16}.}. Nonetheless, we think that it would be beneficial to learn whether our algorithm can be efficiently derandomized. 


\section*{Acknowledgments.}
We thank Ola Svensson and Aravind Srinivasan for their helpful and insightful comments.
\newline\newline
\noindent
J.~Byrka was supported by the National Science Centre, Poland, grant number 2015/18/E/ST6/00456.
P.~Skowron was supported by a Humboldt Research Fellowship for Postdoctoral Researchers.\\
K.~Sornat was supported by the National Science Centre, Poland, grant number 2015/17/N/ST6/03684.

\bibliography{main}

\appendix

\section{Dependent Rounding and Negative Association}\label{sec:rounding_proc}

Consider a vector of $m$ variables $(y_{i})_{i \in [m]}$, and let $y^*_{i}$ denote the initial value of the variable $y_{i}$. For simplicity we will assume that $0 \leq y^*_{i} \leq 1$ for each $i$, and that $k = \sum_{i \in [m]}y^*_{i}$ is an integer. A rounding procedure takes this vector of (fractional) variables as an input, and transforms it into a vector of 0/1 integers. We focus on a specific rounding procedure studied by Srinivasan~\cite{Srinivasan01} which we refer to as \emph{dependent rounding} (DR).

DR works in steps: in each step it selects two fractional variables, say $y_i$ and $y_j$, and changes the values of these variables to $y_i'$ and $y_j'$ so that $y_i' + y_j' = y_i + y_j$, and so that $y_i'$ or $y_j'$ is an integer. Thus, after each iteration at least one additional variable becomes an integer. The rounding procedure stops, when all variables are integers. 
In each step the randomization is involved: with some probability $p$ variable $y_i$ is rounded to an integer value, and with probability $1-p$ variable $y_j$ becomes an integer. The value of the probability $p$ is selected so as to preserve the expected value of each individual entry $y_i$. Clearly, if $y_i + y_j \geq 1$, then one of the variables is rounded to 1; otherwise, one of the variables is rounded to 0. For example, if $y_i = 0.4$ and $y_j = 0.8$, then with probability $0.25$ the values of the variables $y_i$ and $y_j$ change to, respectively, $1$ and $0.2$; and with probability $0.75$ they change to, respectively, $0.2$ and $1$. If $y_i = 0.3$ and $y_j = 0.2$, then with probability $0.4$ the values of the two variables change to, respectively, $0$ and $0.5$; and with probability $0.6$, to, respectively, $0.5$ and $0$.

Let $Y_i$ denote the random variable which returns one if $y_i$ is rounded to one after the whole rounding procedure, and zero, otherwise. It was shown~\cite{Srinivasan01} that the DR generates distributions of $Y_i$ which satisfy the following three properties:
\begin{description}
\item[Marginals.] $\prob{Y_i = 1} = y^*_{i}$,
\item[Sum Preservation.] $\prob{\sum_i Y_i = k} = 1$,
\item[Negative Correlation.] For each $S \subseteq [m]$ it holds that $\prob{\bigwedge_{i \in S} (Y_i = 1)} \leq \prod_{i \in S}\prob{Y_i = 1}$, and $\prob{\bigwedge_{i \in S} (Y_i = 0)} \leq \prod_{i \in S}\prob{Y_i = 0}$.
\end{description}  

These three properties are often used in the analysis of approximation algorithms based on dependent rounding for various optimization problems---see, e.g.,~\cite{GandhiKPS06}. In fact, DR satisfies an even stronger property than NC, called \emph{conditional negative association} (CNA)~\cite{KramerCR11}, yet, to the best of our knowledge, this property has never been used before for analyzing algorithms based on the DR procedure.

For two random variables, $X$ and $Y$, by $\cov[X, Y]$ we denote the covariance between $X$ and $Y$. Recall that $\cov[X, Y] = \expected{XY} - \expected{X} \cdot \expected{Y}$.

\begin{description}
\item[Negative Association~\cite{negativeAssociation}.] For each $S, Q \subseteq [m]$ with $S \cap Q = \emptyset$, $s = |S|$, and $q = |Q|$, and each two nondecreasing functions, $f\colon [0, 1]^s \to \reals$ and $g\colon [0, 1]^q \to \reals$, it holds that:
\begin{align*}
\cov\big[f(Y_i\colon i \in S), g(Y_i\colon i \in Q)\big] \leq 0.
\end{align*}

\item[Conditional Negative Association.] We say that the sequence of random variables $(Y_i)_{i \in [m]}$ satisfies the CNA property if the conditional variables $(Y_{[m] \setminus S} | Y_S = a)$ satisfy NA for any $S \subseteq [m]$ and any $a = (a_i)_{i \in S}$. For $S = \emptyset$, CNA is equivalent to NA. It was shown by Dubhashi et al.~\cite{DubhashiJR07} that if one rounds the variables according to a predefined linear order over the variables $\succ$ (i.e., if one always chooses for rounding the two fractional variables which are earliest in $\succ$), then the resulting distribution satisfies CNA. Yet, the requirement of following a predefined linear order of variables is too restrictive for our needs. Then, Kramer et al.~\cite{KramerCR11} showed that DR following a predefined order on pairs of variables that implements a tournament tree returns a distribution satisfying CNA.

\end{description}  

In our analysis we will use a simpler version of the NA property, which nevertheless is expressive enough for our needs. We introduce the following property.

\begin{description}
\item[Binary Negative Association (BNA).] For each $S, Q \subseteq [m]$ with $S \cap Q = \emptyset$, $s = |S|$, and $q = |Q|$, and each two nondecreasing functions, $f\colon \{0, 1\}^s \to \{0, 1\}$ and $g\colon \{0, 1\}^q \to \{0, 1\}$, we have:
\begin{align*}
&\cov\big[f(Y_i\colon i \in S), g(Y_i\colon i \in Q)\big] \leq 0.
\end{align*}
\end{description}

From the definitions it is easy to see that CNA $\implies$ NA $\implies$ BNA.

\subsection{BNA is Strictly Stronger than NC}

We now argue that BNA is a strictly stronger property than NC. First we show a straightforward inductive argument that BNA implies NC. Next we provide an example of a distribution that satisfies NC but not BNA. In fact, this distribution is generated by a not-careful-enough implementation of DR.

\begin{lemma}\label{lem:cov_for_binary}
For two binary random variables $X$, and $Y$, $X, Y \in \{0, 1\}$, the condition $\cov[X, Y] \leq 0$ is equivalent to $\prob{X = 1 \wedge Y = 1} \leq \prob{X = 1} \cdot \prob{Y = 1}$.
\end{lemma}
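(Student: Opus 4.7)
The plan is to prove the equivalence by a direct computation, exploiting the fact that for $0/1$-valued variables expectations coincide with probabilities of the event ``variable equals $1$.'' There is no real obstacle here; the argument is essentially unpacking the definition of covariance.

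First I would recall the identity $\cov[X,Y] = \expected{XY} - \expected{X}\cdot \expected{Y}$, which holds for any pair of random variables with finite second moments (and certainly for bounded ones like ours). Then I would observe three elementary facts specific to the binary setting: since $X\in\{0,1\}$, we have $\expected{X} = 0\cdot\prob{X=0} + 1\cdot \prob{X=1} = \prob{X=1}$, and symmetrically $\expected{Y} = \prob{Y=1}$. Moreover, the product $XY$ is itself a $\{0,1\}$-valued random variable, taking value $1$ precisely on the event $\{X=1\wedge Y=1\}$ and $0$ otherwise, so $\expected{XY} = \prob{XY = 1} = \prob{X=1 \wedge Y=1}$.

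Substituting these three identities into the covariance formula yields
\[
\cov[X,Y] \;=\; \prob{X=1 \wedge Y=1} \;-\; \prob{X=1}\cdot \prob{Y=1}.
\]
From this single equation both directions of the claimed equivalence are immediate: the left-hand side is $\leq 0$ if and only if the right-hand side is $\leq 0$, i.e., if and only if $\prob{X=1 \wedge Y=1} \leq \prob{X=1}\cdot\prob{Y=1}$. This completes the proof sketch; the whole argument is a two-line unpacking of definitions and requires no further machinery.
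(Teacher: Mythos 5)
Your proof is correct and takes exactly the same approach as the paper's: both unpack $\cov[X,Y] = \expected{XY} - \expected{X}\expected{Y}$ using the observations that $\expected{X}=\prob{X=1}$, $\expected{Y}=\prob{Y=1}$, and $\expected{XY}=\prob{X=1\wedge Y=1}$ for binary variables. You simply spell out the final substitution in more detail than the paper's one-line proof.
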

\begin{proof}
Observe that for binary variables, $X$ and $Y$, it holds that $\expected{X} = \prob{X = 1}$, $\expected{Y} = \prob{Y = 1}$, and $\expected{XY} = \prob{X = 1 \wedge Y = 1}$.
\end{proof}

\begin{lemma}\label{lem:bna_implies_nc}
Binary Negative Association of $(Y_i)_{i \in [m]}$ implies their Negative Correlation.
\end{lemma}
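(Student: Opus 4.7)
The plan is to prove both inequalities of Negative Correlation by induction on $|S|$, using Lemma~\ref{lem:cov_for_binary} to translate the BNA covariance inequality into the desired probability inequality on suitably chosen binary, monotone functions of the $Y_i$.

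The base case $|S| = 1$ is immediate, as both sides coincide. For the inductive step, fix $S \subseteq [m]$ with $|S| \geq 2$ and partition $S = S_1 \sqcup S_2$ into two nonempty disjoint subsets, so that the inductive hypothesis applies to both $S_1$ and $S_2$.

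For the first NC inequality (the ``all-ones'' case), I would take $f(Y_i : i \in S_1) = \prod_{i \in S_1} Y_i$ and $g(Y_i : i \in S_2) = \prod_{i \in S_2} Y_i$. These are nondecreasing functions from $\{0,1\}^{|S_1|}$ and $\{0,1\}^{|S_2|}$ into $\{0,1\}$, so BNA gives $\cov[f, g] \leq 0$. Since both $f$ and $g$ are binary, Lemma~\ref{lem:cov_for_binary} yields
\[
\prob{\textstyle\bigwedge_{i \in S} Y_i = 1} \;\leq\; \prob{\textstyle\bigwedge_{i \in S_1} Y_i = 1} \cdot \prob{\textstyle\bigwedge_{i \in S_2} Y_i = 1},
\]
and applying the inductive hypothesis to each factor on the right concludes this case.

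For the second NC inequality (the ``all-zeros'' case), the product trick is not directly available because $\prod_i (1 - Y_i)$ is nonincreasing in each $Y_i$. Instead I would apply BNA to the binary nondecreasing functions $f(Y_i : i \in S_1) = \max_{i \in S_1} Y_i$ and $g(Y_i : i \in S_2) = \max_{i \in S_2} Y_i$. BNA gives $\cov[f,g] \leq 0$, and since $\cov[1-f, 1-g] = \cov[f, g]$, the same inequality holds for the complementary binary variables $1-f$ and $1-g$. Lemma~\ref{lem:cov_for_binary} then gives
\[
\prob{\textstyle\bigwedge_{i \in S} Y_i = 0} \;\leq\; \prob{\textstyle\bigwedge_{i \in S_1} Y_i = 0} \cdot \prob{\textstyle\bigwedge_{i \in S_2} Y_i = 0},
\]
and the inductive hypothesis closes the argument. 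There is no substantive obstacle here; the only subtlety is choosing $\max$ rather than a product so that the monotonicity hypothesis of BNA is respected in the zero-case.
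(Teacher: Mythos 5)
Your proof is correct, and the overall strategy---induction on $|S|$, choosing binary nondecreasing indicator functions, and converting the BNA covariance bound to a probability inequality via Lemma~\ref{lem:cov_for_binary}---matches the paper. The paper removes a single index $j$ from $S$ at each step (the special case $S_2 = \{j\}$ of your partition), and your all-ones argument with $f = \prod_{i \in S_1} Y_i$, $g = \prod_{i \in S_2} Y_i$ is the same as the paper's $f = \bigwedge_{i \in S\setminus\{j\}}(Y_i=1)$, $g = (Y_j=1)$. The genuine difference is in the all-zeros case: the paper applies BNA to $f = \bigvee_{i \in S\setminus\{j\}}(Y_i>0)$, $g = (Y_j>0)$ and then works through an inclusion--exclusion manipulation to turn the union bound into the product form. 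Your observation that $\cov[1-f,1-g] = \cov[f,g]$, applied to $f = \max_{i\in S_1} Y_i$ and $g = \max_{i\in S_2} Y_i$, hands you the zero-case inequality directly from Lemma~\ref{lem:cov_for_binary} with no inclusion--exclusion at all, which is cleaner; the paper's version has the modest advantage of not requiring the reader to notice the covariance symmetry under complementation, but yours is shorter and equally rigorous.
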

\begin{proof}
We will prove the NC property by induction on $|S|$. Clearly, the property holds for $|S|=1$. For an inductive step, we define two non-decreasing functions $f(Y_{i}\colon i \in S/\{j\}) = \bigwedge_{i \in S/\{j\}} (Y_{i} = 1)$ and $ g(Y_j) = (Y_j=1)$ for any $j \in S$.

\begin{eqnarray*}
\probBigPar{\bigwedge_{i \in S} (Y_i = 1)} &=& \probBigPar{\bigwedge_{i \in S/\{j\}} (Y_i = 1) \;\wedge\; Y_j=1} \\
&\stackrel{\text{BNA, Lemma~\ref{lem:cov_for_binary}}}{\leq}& \probBigPar{\bigwedge_{i \in S/\{j\}} (Y_i = 1)} \cdot \probBigPar{Y_j=1}\\
&\stackrel{\text{inductive assum.}}{\leq}& \prod_{i \in S}\prob{Y_i = 1}.
\end{eqnarray*} 

In order to bound the probability of $\bigwedge_{i \in S} (Y_i = 0)$ we define two other non-decreasing functions $f(Y_{i}\colon i \in S/\{j\}) = \bigvee_{i \in S/\{j\}} (Y_{i} > 0)$ and $ g(Y_j) = (Y_j > 0)$ for any $j \in S$.
\begin{eqnarray*}
&\probBigPar{\bigwedge_{i \in S} (Y_i = 0)}&  = \hspace{0.2cm}1 - \probBigPar{\bigvee_{i \in S} (Y_i > 0)} \\ 
&=& \hspace{-1.3cm} 1 - \left(\probBigPar{\bigvee_{i \in S/\{j\}} (Y_i > 0)} + \probBigPar{Y_j > 0} - \probBigPar{\bigvee_{i \in S/\{j\}} (Y_i > 0) \;\wedge\; Y_j > 0}\right) \\
&=& \hspace{-1.3cm} \probBigPar{\bigwedge_{i \in S/\{j\}} (Y_i = 0)} - \probBigPar{Y_j > 0} + \probBigPar{\bigvee_{i \in S/\{j\}} (Y_i > 0) \;\wedge\; Y_j > 0} \\
&\stackrel{\text{BNA, Lemma~\ref{lem:cov_for_binary}}}{\leq}&\hspace{-0.5cm} \probBigPar{\bigwedge_{i \in S/\{j\}} (Y_i = 0)} - \probBigPar{Y_j > 0} + \probBigPar{\bigvee_{i \in S/\{j\}} (Y_i > 0)} \cdot  \probBigPar{Y_j > 0} \\
&=& \hspace{-1.3cm} \probBigPar{\bigwedge_{i \in S/\{j\}} (Y_i = 0)} - \probBigPar{Y_j > 0} \cdot \probBigPar{\bigwedge_{i \in S/\{j\}} (Y_i = 0)} \\
&=& \hspace{-1.3cm} \probBigPar{\bigwedge_{i \in S/\{j\}} (Y_i = 0)} \cdot \probBigPar{Y_j=0} \stackrel{\text{inductive assum.}}{\leq} \prod_{i \in S}\prob{Y_i = 0}.
\end{eqnarray*}
\end{proof}

Note that the general formulation of DR does not specify how the pairs of fractional variables are selected. The proof in~\cite{Srinivasan01} that DR satisfies NC is independent of the method in which these pairs of fractional variables are selected. We will now show that, if these pairs are selected by an adaptive adversary who may take into account the way in which the previous pairs were rounded, then the BNA property may not hold (so, also neither NA nor CNA). Consider the following example.

\begin{example}\label{ex:no_bna}
Consider $m = 8, k = 4$, and the vector of variables $(y_{i})_{i \in [8]}$, all with the same initial value $\nicefrac{1}{2}$. Let $S = \{2, 3, 4\}$, $Q = \{5\}$, and:
\begin{align*}
f(Y_2, Y_3, Y_4) = 
\begin{cases}
1 & \quad \text{if~~} Y_2 + Y_3 + Y_4 \geq 2 \\
0 & \quad \text{otherwise} \\
\end{cases}
\quad
\quad
g(Y_5) = Y_5  \text{.}
\end{align*} 
Let $\alpha$ and $\beta$ denote the events that $Y_2 + Y_3 + Y_4 \geq 2$ and that $Y_5 = 1$, respectively. BNA would require that $\prob{\alpha \wedge \beta} \leq \prob{\alpha} \cdot \prob{\beta}$. Consider DR procedure as depicted in the following diagram (the paired variables are enclosed in rounded rectangles). First, we pair variables $y_1$ with $y_5$ and $y_2$ with $y_6$. The way in which the remaining variables are paired depends on the result of rounding within pairs $(y_1, y_5)$ and $(y_2, y_6)$. If $y_1$ and $y_2$ are both rounded to the same integer, then we pair $y_3$ with $y_7$ and $y_4$ with $y_8$. Otherwise, we pair $y_3$ with $y_4$ and $y_7$ with $y_8$.

\begin{figure}[t]
  \includegraphics[trim=0 0 0 0, clip, scale=1.0]{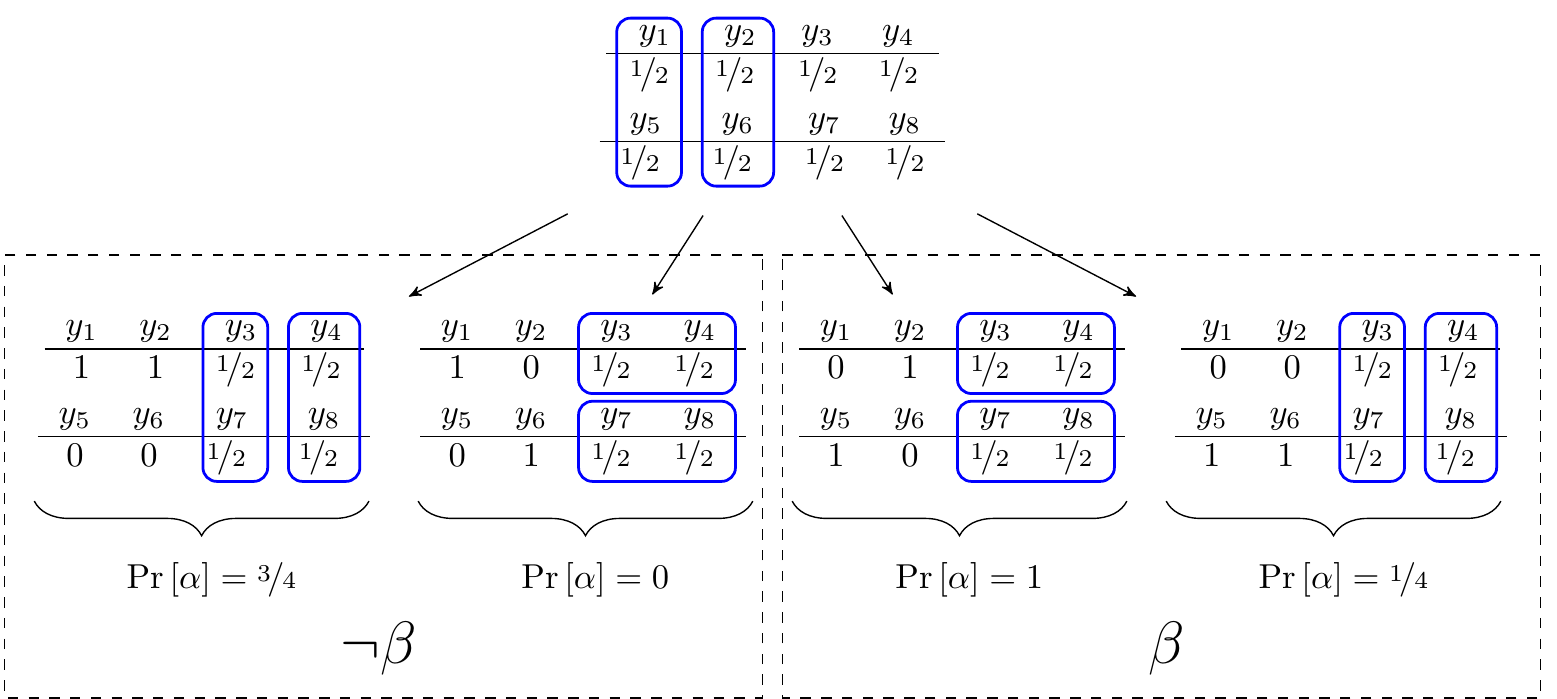}
  \caption{An illustration of Example~\ref{ex:no_bna}.}\label{fig:example_no_bna}
\end{figure}
Note that according to DR each rounding decision is taken with the same probability (e.g., when we pair variables $y_1$ with $y_5$, then the probabilities of $y_1$ and $y_5$ rounded to one is the same). Thus, we observe that $\prob{\alpha} = \nicefrac{1}{2}$, $\prob{\beta} = \nicefrac{1}{2}$, but $\prob{\alpha \wedge \beta} = \nicefrac{1}{4} + \nicefrac{1}{16}$.

\end{example}

Example~\ref{ex:no_bna} is simpler than the one given by Kramer et al.~\cite{KramerCR11}. Both examples show that NA is a strictly stronger property than NC. Kramer et al. use the set of 7 variables with initial values equal to $\nicefrac{3}{7}$, $k=4$, and a predefined order on pairs of variables. Our example uses an adaptive adversary who decides which pair of variables should be rounded in each step of the rounding procedure. Our example cannot be implemented by fixing an order on pairs of variables (hence it also cannot be implemented by fixing a tournament tree). Our 8 variables have marginal probabilities equal to $\nicefrac{1}{2}$, thus the example can be easily understood, and one does not need to calculate probabilities of choosing all ${7 \choose 4}$ 4-element sets. 

\subsection{Fixed Tournament Pairings Ensure BNA}

The method in which fractional elements are paired together can be thought of as a subset of rules of a sports tournament, in which losers drop out of the game, but winners remain and are being paired up for the following games.
The above example shows that an awkward adaptive pairing of remaining players may influence the value of certain functions on the subsets of players. We will show that if the competition is organized by a standard fixed upfront tournament tree,
then such manipulations are not possible, which allows to prove BNA for the outcome of the DR process following such tree.

Intuitively, the way in which the variables are paired should be, in some sense, independent of the result of previous roundings. We consider a fixed binary tree with $m$ leaves---each leaf containing one variable $y_{i}$ with value $y^*_{i}$, so that each variable is put in exactly one leaf; the other nodes are temporarily empty. 
In each step, the algorithm selects two nonempty nodes, say $n_1$ and $n_2$, with a common empty parent, and applies the basic step of the DR procedure to the two variables in nodes $n_1$ and $n_2$. As a result at least one of the variables becomes an integer. If one of the variables is still fractional, we promote this variable with its new value to the parent node. If both variables become integers (which happens when their sum is equal to one), we promote a fake variable $\bot$ to the parent node. When we compare any variable $v$ with $\bot$, we always promote $v$ with its current value to the parent node. An example run of such implementation of the DR procedure is depicted in Figure~\ref{fig:tournament_tree}.      

\begin{figure}
\centering
\begin{tikzpicture}[<-,>=stealth',
                   level 1/.style={sibling distance = 6cm/1,level distance = 1.5cm},
                   level 2/.style={sibling distance = 7.5cm/2,level distance = 1.5cm},
                   level 3/.style={sibling distance = 6cm/3,level distance = 1.5cm},
                   treenode/.style = {align=center, inner sep=2pt, text centered,
                                      font=\sffamily, black, text width=4.0em, text height=0.9em}] 

\node [align=center, inner sep=0pt, text centered, font=\sffamily] {}
    node [treenode, sibling distance = 5cm] {$\bot$}
    child{node [treenode] {$y_1 = 0.2$} 
            child{ node [treenode] {$y_1 = 0.2$} 
            	child{ node [treenode] {$y_1 = 0.7$}}
		child{ node [treenode] {$y_2 = 0.5$}}
            }
            child{ node [treenode] {$\bot$}
            	child{ node [treenode] {$y_3 = 0.5$}}
		child{ node [treenode] {$y_4 = 0.5$}}
            }                            
    }
    child{ node [treenode] {$y_6 = 0.8$}
				child{ node [treenode] {$y_5 = 0.2$}}
				child{ node [treenode] {$y_6 = 0.6$}}
   }
; 
\end{tikzpicture}
\caption{An example run of DR using a tournament tree structure. In this example the result is: $y_2, y_3, y_6 = 1$, and $y_1, y_4, y_5 = 0$.}\label{fig:tournament_tree}
\end{figure}
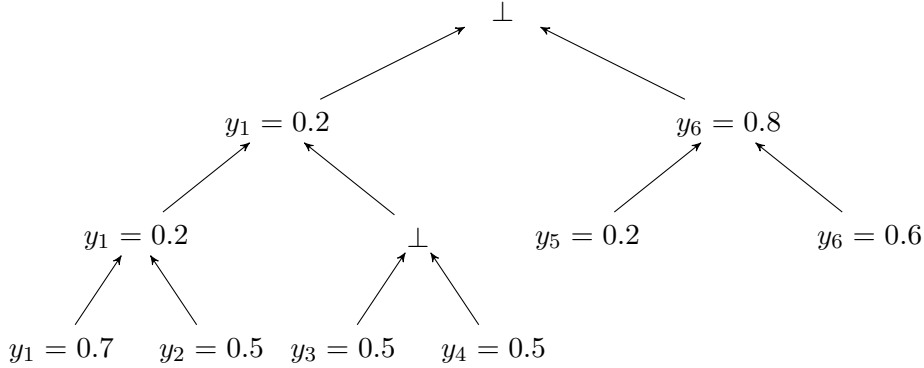

Hereinafter we assume that the DR procedure uses a fixed tournament tree structure, as described above.

\begin{theorem}\label{thm:dep_rounding_bna}
The DR algorithm using a tournament tree structure guarantees BNA.
\end{theorem}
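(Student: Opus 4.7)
My plan is to prove the theorem by structural induction on the tournament tree $T$, strengthened so the inductive claim also applies to subtrees whose leaf values sum to a non-integer and thus leave one fractional entry at the end of the local DR process. The base case (a single leaf) is immediate since the single output equals its deterministic input.

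For the inductive step, split $T$ into its left and right subtrees $T_L, T_R$ with disjoint leaf sets $V_L, V_R$, run DR on their internal nodes first, and let $\vec{U}$ denote the resulting leaf-value vector. Within each side, exactly one entry is fractional at a random index $\alpha_L$ (resp.\ $\alpha_R$), all other entries being integer. By the inductive hypothesis, $\vec{U}_{V_L}$ and $\vec{U}_{V_R}$ each satisfy BNA on their index sets; since $T_L$ and $T_R$ use independent coins, the joint vector $\vec{U}$ satisfies BNA on $V_L \cup V_R$ by a standard closure of BNA under disjoint-union of independent systems. To extend BNA to the real-valued intermediate state $\vec{U}$, I would decompose any monotone real-valued function into its Boolean indicator level sets and integrate the BNA inequality across levels.

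The root step performs one DR operation on $(U_{\alpha_L}, U_{\alpha_R})$, leaving all other coordinates unchanged. Sum-preservation combined with $\sum_i y^*_i = k \in \mathbb{N}$ forces $U_{\alpha_L} + U_{\alpha_R} = 1$, so the step outputs $(Y_{\alpha_L}, Y_{\alpha_R}) = (1,0)$ with probability $U_{\alpha_L}$ and $(0,1)$ with probability $U_{\alpha_R}$. To verify BNA of the final vector $\vec{Y}$, take disjoint $S, Q \subseteq V_L \cup V_R$ and monotone Boolean $f, g$, and apply the law of total covariance conditioning on $\vec{U}$:
\[
\cov[f(Y_S), g(Y_Q)] = \expected{\cov[f(Y_S), g(Y_Q) \mid \vec{U}]} + \cov[\expected{f(Y_S)\mid \vec{U}}, \expected{g(Y_Q)\mid \vec{U}}].
\]
A case analysis on the positions of $\alpha_L, \alpha_R$ in $S, Q$ shows that the conditional covariance vanishes unless the two active indices split between $S$ and $Q$, in which case negativity follows from the perfect anti-correlation of the DR pair together with the monotonicity of $f, g$.

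The technical heart of the proof is the non-positivity of the second term. Using the DR marginals $\expected{Y_{\alpha_L}\mid\vec{U}} = U_{\alpha_L}$ and the monotonicity of $f, g$, the conditional expectations $F(\vec{U}) := \expected{f(Y_S)\mid\vec{U}}$ and $G(\vec{U}) := \expected{g(Y_Q)\mid\vec{U}}$ become monotone real-valued functions of disjoint subsets of coordinates of $\vec{U}$, whence the second term is non-positive by the BNA of $\vec{U}$. The main obstacle is the case where both active indices fall into the same set (say $S$): there $F$ can fail to be monotone in $U_{\alpha_L}$ because swapping $(Y_{\alpha_L}, Y_{\alpha_R})$ between $(1,0)$ and $(0,1)$ is not a monotone move. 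I would handle this by further conditioning on the pair $(\alpha_L, \alpha_R)$, exploiting the single remaining degree of freedom $U_{\alpha_L}$ with $U_{\alpha_R} = 1 - U_{\alpha_L}$ together with $S \cap Q = \emptyset$ to re-parametrise $F$ so that monotonicity is restored in the relevant coordinate. This is exactly where the tournament-tree structure is essential: an adversarial pairing as in Example~\ref{ex:no_bna} adaptively couples values across subtrees in a way that breaks this monotone-functional reduction.
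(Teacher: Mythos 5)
Your structural-induction-on-the-tree strategy is genuinely different from the paper's, which inducts on the \emph{number of remaining fractional variables} and peels off a \emph{single} DR step at a time. Unfortunately, your route has a gap at exactly the point you flag, and I don't think the proposed patch closes it.

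The obstacle is real and not a technicality. Condition on the intermediate vector $\vec{U}$ after both subtrees are reduced, and suppose both random champions $\alpha_L,\alpha_R$ land in $S$. Then
$F(\vec U)=U_{\alpha_L} f(\vec U_S^{(1,0)})+(1-U_{\alpha_L}) f(\vec U_S^{(0,1)})$, where the two arguments differ by swapping a single $1$ between positions $\alpha_L,\alpha_R$; they are incomparable, so $F$ is not a non-decreasing function of $\vec U_S$. Your fix is to ``further condition on $(\alpha_L,\alpha_R)$ and re-parametrise,'' but $(\alpha_L,\alpha_R)$ is itself a function of $\vec U$, and conditioning on such an event is precisely the operation under which negative association is \emph{not} generally preserved (that is what CNA, not NA, is for). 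So after conditioning you no longer have a licence to invoke the inductive-hypothesis covariance bound, and the re-parametrisation sketch doesn't supply one. A second, separate issue: your ``decompose a monotone real-valued function into Boolean level sets'' step cannot convert BNA into the NA-type statement you need, because the level-set indicators $\mathbbm{1}[F(\vec U)\ge t]$ are Boolean-valued functions of the \emph{real-valued} vector $\vec U$, whereas BNA is a statement about Boolean-valued functions of a $\{0,1\}$-valued vector. Your strengthened inductive claim must therefore already be a full (real-valued) NA-type statement for subtrees with non-integer sums, and that needs its own proof rather than following from BNA.

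The paper sidesteps all of this by inducting on one elementary DR step. There the three-way case split is benign: when $i,j\notin Q$ (which covers the ``both active indices in $S$'' situation that troubles you), the single step leaves the distribution of the $Q$-variables unchanged, because the tournament structure ensures downstream rounding depends only on the \emph{value} promoted to the parent, not on which child supplied it; hence $\probBigPar{\beta\mid \overline{y(i)}}=\probBigPar{\beta\mid\overline{y(j)}}=\probBigPar{\beta\mid\overline y}$ and the factorisation is immediate. The only nontrivial case is $i\in S$, $j\in Q$, which is handled by the two monotonicity claims and a small rearrangement inequality (Lemma~\ref{lemma:simple_inequality}). That one-step decomposition is where the tournament tree is exploited most cleanly; your ``whole-subtree-then-root'' decomposition pushes the tournament-tree argument one level too coarse, which is why the hard case appears. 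If you want to salvage a subtree decomposition, you would essentially need to prove a conditional NA property for the intermediate vector (so you can legitimately condition on $(\alpha_L,\alpha_R)$), which is a stronger intermediate claim than what you formulated and roughly the CNA result of Kramer et al.~\cite{KramerCR11}.
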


The proof follows from Theorem 5 in~\cite{KramerCR11}. The theorem says that DR using a tournament tree structure produces distributions satisfying the NA property; clearly, NA implies BNA. However, to make the paper self-contained, we provide our inductive proof of Theorem~\ref{thm:dep_rounding_bna} in the remainder of the section. Our proof uses induction on the number of fractional variables; Kramer et al.~\cite{KramerCR11} use induction on the number of leaves in the tournament tree. While the two proofs use similar ideas and are of a similar difficulty, our proof is slightly more direct and shorter.

\begin{proof}[Proof of Theorem~\ref{thm:dep_rounding_bna}]
Recall that $Y_i$ is a random variable that indicates whether or not the described DR procedure rounds $y_i$ to 1. Let $S, Q \subseteq [m]$ with $S \cap Q = \emptyset$, $s = |S|$, and $q = |Q|$, and let $f\colon \{0, 1\}^s \to \{0, 1\}$ and $g\colon \{0, 1\}^q \to \{0, 1\}$ be two nondecreasing functions. Let $\alpha$ and $\beta$ denote the events that $f(Y_i\colon i \in S) = 1$ and $g(Y_i\colon i \in Q) = 1$, respectively.

For a vector $\overline{y}$ of $m$ values, which represents the values of the variables $(y_i)_{i \in [m]}$ that appear during our rounding procedure by $\probBigPar{E | \overline{y}}$ we denote the probability that an event $E$ occurs under the condition that we have reached the point of the rounding algorithm where the variables $(y_i)_{i \in [m]}$ have values indicated by $\overline{y}$. By Lemma~\ref{lem:cov_for_binary} it is sufficient to show that the following inequality holds for each $\overline{y}$:
\begin{align}\label{eq:_cond_negative_association}
\probBigPar{\alpha \wedge \beta | \overline{y}} \leq \probBigPar{\alpha | \overline{y}} \cdot \probBigPar{\beta | \overline{y}}.
\end{align}
We will prove this statement by induction on the number of fractional variables in $\overline{y}$. If $\overline{y}$ contains only integer variables, then it is clear that Inequality~\eqref{eq:_cond_negative_association} is satisfied. Now assume that Inequality~\eqref{eq:_cond_negative_association} is satisfied whenever $\overline{y}$ contains at most $\ell$ fractional values. We will show that Inequality~\eqref{eq:_cond_negative_association} is also satisfied when $\overline{y}$ contains $\ell+1$ fractional values. Let $\overline{y}$ be such vector. Consider a single step of our algorithm, where the two variables $y_i$ and $y_j$ are paired. Let $E_i$ and $E_j$ denote the events that, respectively, $y_i$ and $y_j$, is increased. Similarly, let $\overline{y(i)}$ and $\overline{y(j)}$ denote the vectors of the values of the variables $(y_i)_{i \in [m]}$ when, respectively, $y_i$ and $y_j$ is increased. We have:
\begin{align*}
\probBigPar{\alpha \wedge \beta | \overline{y}} = \probBigPar{\alpha \wedge \beta | \overline{y(i)}} \cdot \probBigPar{E_i} + \probBigPar{\alpha \wedge \beta | \overline{y(j)}} \cdot \probBigPar{E_j} \text{.}
\end{align*}
By our inductive assumption, it holds that:
\begin{align}\label{eq:_cond_negative_association2}
\probBigPar{\alpha \wedge \beta | \overline{y}} \leq \probBigPar{\alpha | \overline{y(i)}} \cdot \probBigPar{\beta | \overline{y(i)}} \cdot \probBigPar{E_i}
                                           + \probBigPar{\alpha | \overline{y(j)}} \cdot \probBigPar{\beta | \overline{y(j)}} \cdot \probBigPar{E_j} \text{.}
\end{align}
Now, we consider the following cases:
\begin{description}
\item[Case 1: $i, j \notin S$.] Observe that either the fake variable $\bot$ is promoted to the parent node or one of the variables: $y_i$ and $y_j$. Observe that irrespectively of which of the two variables is promoted to the parent, the promoted variable will always hold the same new value. Further, observe that the subsequent rounding steps do not depend on which variable has been promoted to the parent node, but only on the value of the promoted variable. Thus, the rounding within the pair of variables, $y_i$ and $y_j$, affects only the probability of events including $Y_i$ or $Y_j$ (here we use the assumption that the tournament tree is fixed; the way in which the variables are paired does not depend on the result of rounding within the pair $(y_i, y_j)$). In particular, $\probBigPar{\alpha | \overline{y(i)}} = \probBigPar{\alpha | \overline{y}}$ and  $\probBigPar{\alpha | \overline{y(j)}} = \probBigPar{\alpha | \overline{y}}$. We can rewrite Inequality~\eqref{eq:_cond_negative_association2} as follows:
\begin{align*}
\probBigPar{\alpha \wedge \beta | \overline{y}} &\leq \probBigPar{\alpha | \overline{y}} \cdot \probBigPar{\beta | \overline{y(i)}} \cdot \probBigPar{E_i}
                                           + \probBigPar{\alpha | \overline{y}} \cdot \probBigPar{\beta | \overline{y(j)}} \cdot \probBigPar{E_j} \\
                                          &= \probBigPar{\alpha | \overline{y}}\Big( \probBigPar{\beta | \overline{y(i)}} \cdot \probBigPar{E_i} + \probBigPar{\beta | \overline{y(j)}} \cdot \probBigPar{E_j}\Big) \\
                                          &= \probBigPar{\alpha | \overline{y}} \cdot \probBigPar{\beta | \overline{y}} \text{.}
\end{align*}

\item[Case 2: $i, j \notin Q$.] The same reasoning but applied to $\beta$ rather than to $\alpha$, leads to the same conclusion.

\item[Case 3: $i \in S$ and $j \in Q$ (the case when $i \in Q$ and $j \in S$ is symmetric).] As a result of rounding, one of the variables, $y_i$ and $y_j$, increases, and the other one decreases. Let us analyze what happens when $y_i$ increases and $y_j$ decreases, i.e., when event $E_i$ occurs. By the same reasoning as in Case 1, we infer that the fact that $y_i$ is increased does not influence the further process of rounding other variables than $y_i$ and $y_j$. At the same time when $y_i$ is increased, it becomes more likely that this variable will eventually become one, in comparison to the case when $y_j$ is increased:
\begin{enumerate}[(i)]
\item If $y_i + y_j \geq 1$, then $y_i$ being increased means that $y_i$ becomes one right away.
\item Otherwise, i.e., if $y_i + y_j < 1$: if $y_i$ is increased, it is still positive so it is still possible that it will eventually become one. On the other hand, if $y_j$ is increased, then $y_i$ is rounded down to zero, which makes it impossible for $y_i$ to become one.
\end{enumerate}
Since the function $f$ is nondecreasing we infer that $\probBigPar{\alpha | \overline{y(i)}} \geq \probBigPar{\alpha | \overline{y(j)}}$. 
The same reasoning allows us to conclude that $\probBigPar{\beta | \overline{y(i)}} \leq \probBigPar{\beta | \overline{y(j)}}$. This is summarized in the following claim:
\begin{claim}\label{claim:probabilities_estimations1}
$\probBigPar{\alpha | \overline{y(i)}} \geq \probBigPar{\alpha | \overline{y(j)}}$ and $\probBigPar{\beta | \overline{y(i)}} \leq \probBigPar{\beta | \overline{y(j)}}$.
\end{claim}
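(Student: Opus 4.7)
My plan is to prove both inequalities via a single coupling of the two randomized continuations (starting from $\overline{y(i)}$ and $\overline{y(j)}$) under which $Y_i(\overline{y(i)}) \geq Y_i(\overline{y(j)})$, $Y_j(\overline{y(i)}) \leq Y_j(\overline{y(j)})$, and $Y_k(\overline{y(i)}) = Y_k(\overline{y(j)})$ for every $k \notin \{i, j\}$, pointwise on the coupling probability space. Monotonicity of $f$ together with $i \in S$ and $j \notin S$ will then give the first inequality, and the analogous facts for $g$ with $j \in Q$ and $i \notin Q$ will give the second.

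To build the coupling, I exploit the fact that the tournament tree is fixed in advance. After the rounding of the pair $(y_i, y_j)$, a single value $v$ is promoted to the parent of the pair's two leaves: $v = y_i + y_j - 1$ if $y_i + y_j \geq 1$, and $v = y_i + y_j$ otherwise. In the first case, one of $y_i, y_j$ is finalized at $1$ and the other survives with value $v$; in the second case one is finalized at $0$ and the other survives with value $v$. In both scenarios $E_i$ and $E_j$ the number $v$, the tree structure above, and the values carried by all the other leaves coincide. The remainder of the DR process is therefore governed by the same conditional distribution, and we can realize it with one shared copy of the randomness; this alone forces $Y_k(\overline{y(i)}) = Y_k(\overline{y(j)})$ for every $k \notin \{i, j\}$, exactly as in Case~1 of the induction.

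Next I analyze the local pair $(Y_i, Y_j)$ under the coupling. Let $B$ denote the indicator that the promoted value $v$ eventually gets rounded up to $1$ in the remainder of the process; by the previous paragraph, $B$ is a well-defined random bit with the same law in both scenarios and, under the coupling, literally the same bit. If $y_i + y_j \geq 1$, then under $\overline{y(i)}$ the pair is $(Y_i, Y_j) = (1, B)$ while under $\overline{y(j)}$ it is $(B, 1)$. If $y_i + y_j < 1$, then under $\overline{y(i)}$ the pair is $(B, 0)$ while under $\overline{y(j)}$ it is $(0, B)$. In either sub-case one reads off $Y_i(\overline{y(i)}) \geq Y_i(\overline{y(j)})$ and $Y_j(\overline{y(i)}) \leq Y_j(\overline{y(j)})$ pointwise.

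Combining the two steps, the vector $(Y_k)_{k \in S}$ under $\overline{y(i)}$ coordinatewise dominates the one under $\overline{y(j)}$: the only coordinate in $S$ that can differ is $i$, since $S \cap Q = \emptyset$ and $j \in Q$ force $j \notin S$, and at that coordinate domination holds. By monotonicity of $f$, the indicator of $\alpha$ is at least as large under $\overline{y(i)}$ as under $\overline{y(j)}$ pointwise; taking expectations gives $\probBigPar{\alpha \mid \overline{y(i)}} \geq \probBigPar{\alpha \mid \overline{y(j)}}$. The same argument applied to $g$, using $i \notin Q$ and the opposite domination at coordinate $j \in Q$, yields $\probBigPar{\beta \mid \overline{y(i)}} \leq \probBigPar{\beta \mid \overline{y(j)}}$. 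The delicate point is justifying that the rounding of variables outside $\{i, j\}$ really is independent of which of $y_i, y_j$ ``won'' the current step; this is precisely where the tournament-tree assumption enters and forbids adaptive pairings of the kind used in Example~\ref{ex:no_bna}.
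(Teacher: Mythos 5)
Your proof is correct and takes essentially the same approach as the paper: both arguments rest on the observation that, because the tournament tree is fixed, the rounding of every variable outside $\{i,j\}$ and the promoted value $v$ are identical under $\overline{y(i)}$ and $\overline{y(j)}$, so the only difference is the pair $(Y_i,Y_j)$, which is coordinatewise dominated, and monotonicity of $f$ and $g$ then yields the two inequalities. Your explicit coupling is simply a cleaner, more rigorous formalization of the paper's slightly informal phrasing (``$y_i$ becomes more likely to eventually become one'' combined with Case~1's argument that the remainder of the process is unaffected).
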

At the same time:
\begin{align}\label{eq:sum_of_cond_prob}
\begin{split}
\probBigPar{\beta | \overline{y(i)}} \cdot \probBigPar{E_i} + \probBigPar{\beta | \overline{y(j)}} \cdot \probBigPar{E_j} &= \probBigPar{\beta | \overline{y}} \\
   &= \probBigPar{\beta | \overline{y}} \big(\probBigPar{E_i} + \probBigPar{E_j}\big).
\end{split}
\end{align}
\begin{claim}\label{claim:probabilities_estimations2}
It holds that:
\begin{enumerate}[(i)]
\item $\probBigPar{\beta | \overline{y(i)}}\probBigPar{E_i} \leq \probBigPar{\beta | \overline{y}} \probBigPar{E_i}$, and
\item $\probBigPar{\beta | \overline{y(j)}}\probBigPar{E_j} \geq \probBigPar{\beta | \overline{y}} \probBigPar{E_j}$.
\end{enumerate}
\end{claim}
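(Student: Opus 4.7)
The plan is to deduce Claim~\ref{claim:probabilities_estimations2} as an essentially immediate consequence of Claim~\ref{claim:probabilities_estimations1} together with the convex-combination identity already recorded in equation~\eqref{eq:sum_of_cond_prob}. First I would note that in a single pairing step the events $E_i$ and $E_j$ are mutually exclusive and exhaustive — exactly one of the two paired variables is increased while the other is decreased — and therefore $\probBigPar{E_i} + \probBigPar{E_j} = 1$. This is precisely what is used to pass from the first to the second line of~\eqref{eq:sum_of_cond_prob}, and it tells us that $\probBigPar{\beta \mid \overline{y}}$ is a convex combination of $\probBigPar{\beta \mid \overline{y(i)}}$ and $\probBigPar{\beta \mid \overline{y(j)}}$ with weights $\probBigPar{E_i}$ and $\probBigPar{E_j}$.

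Next I would invoke Claim~\ref{claim:probabilities_estimations1}, which asserts $\probBigPar{\beta \mid \overline{y(i)}} \leq \probBigPar{\beta \mid \overline{y(j)}}$. Since a convex combination of two real numbers always lies between them, we immediately obtain
\[
\probBigPar{\beta \mid \overline{y(i)}} \;\leq\; \probBigPar{\beta \mid \overline{y}} \;\leq\; \probBigPar{\beta \mid \overline{y(j)}}.
\]
Multiplying the left inequality by $\probBigPar{E_i} \geq 0$ yields part~(i), and multiplying the right inequality by $\probBigPar{E_j} \geq 0$ yields part~(ii), which completes the proof of the claim.

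There is no real obstacle here: the statement is little more than the elementary fact that a weighted average of two numbers lies between them. All of the substantive work has already been done in establishing Claim~\ref{claim:probabilities_estimations1} — namely the monotonicity argument that promoting $y_i$ rather than $y_j$ in the current pairing makes the up-set event $\alpha$ no less likely and the up-set event $\beta$ no more likely, using that $f,g$ are non-decreasing and that the tournament-tree structure insulates the rest of the rounding from which of $y_i,y_j$ is promoted. Claim~\ref{claim:probabilities_estimations2} is then precisely the bookkeeping ingredient that, when combined with Claim~\ref{claim:probabilities_estimations1} and inequality~\eqref{eq:_cond_negative_association2}, allows the two cross terms $\probBigPar{\alpha\mid \overline{y(i)}}\probBigPar{\beta\mid \overline{y(i)}}\probBigPar{E_i}$ and $\probBigPar{\alpha\mid \overline{y(j)}}\probBigPar{\beta\mid \overline{y(j)}}\probBigPar{E_j}$ to be repackaged into the desired product $\probBigPar{\alpha\mid \overline{y}}\cdot \probBigPar{\beta\mid \overline{y}}$ in Case~3 of the outer induction.
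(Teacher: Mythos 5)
Your proof is correct and uses exactly the same two ingredients as the paper's (Claim~\ref{claim:probabilities_estimations1} and the convex-combination identity~\eqref{eq:sum_of_cond_prob}); the paper phrases it as a short argument by contradiction, dividing out $\probBigPar{E_i}$ and $\probBigPar{E_j}$, whereas you argue directly that the weighted average $\probBigPar{\beta\mid\overline{y}}$ lies between its two endpoints and multiply back. This is a purely cosmetic difference — your direct version is, if anything, slightly cleaner since it avoids the (harmless) division by probabilities that could in principle be zero.
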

\begin{proof}[Proof of Claim~\ref{claim:probabilities_estimations2}]
For the sake of contradiction, let us assume that one of these inequalities is not satisfied, say assume that $\probBigPar{\beta | \overline{y(i)}}\probBigPar{E_i} > \probBigPar{\beta | \overline{y}} \probBigPar{E_i}$. By Equality~\eqref{eq:sum_of_cond_prob} we get that also $\probBigPar{\beta | \overline{y(j)}}\probBigPar{E_j} < \probBigPar{\beta | \overline{y}} \probBigPar{E_j}$. In these two conditions we can reduce the factors $\probBigPar{E_i}$ and $\probBigPar{E_j}$, respectively, and obtain that $\probBigPar{\beta | \overline{y(i)}} > \probBigPar{\beta | \overline{y}}$ and $\probBigPar{\beta | \overline{y(j)}} < \probBigPar{\beta | \overline{y}}$. By combining these two inequalities, we get that $\probBigPar{\beta | \overline{y(i)}} > \probBigPar{\beta | \overline{y(j)}}$, which contradicts Claim~\ref{claim:probabilities_estimations1}.
\end{proof}

Now, we continue the proof of Theorem~\ref{thm:dep_rounding_bna}. We will apply Lemma~\ref{lemma:simple_inequality} with:
\begin{enumerate}[(i)]
\item $a_1 = \probBigPar{\alpha | \overline{y(i)}}$, $a_2 = \probBigPar{\alpha | \overline{y(j)}}$,
\item $b_1 = \probBigPar{\beta | \overline{y(i)}} \probBigPar{E_i}$, $b_2 = \probBigPar{\beta | \overline{y(j)}} \probBigPar{E_j}$,
\item $c_1 = \probBigPar{\beta | \overline{y}} \probBigPar{E_i}$, and $c_2 = \probBigPar{\beta | \overline{y}} \probBigPar{E_j}$.
\end{enumerate}
($a_1 \geq a_2$ by Claim~\ref{claim:probabilities_estimations1}; $c_1 \geq b_1$ and $b_2 \geq c_2$, by Claim~\ref{claim:probabilities_estimations2}; $b_1 + b_2 = c_1 + c_2$ by Equality~\eqref{eq:sum_of_cond_prob}).
We get that:
\begin{align*}
&\probBigPar{\alpha | \overline{y(i)}} \cdot \probBigPar{\beta | \overline{y(i)}} \cdot \probBigPar{E_i} + \probBigPar{\alpha | \overline{y(j)}} \cdot \probBigPar{\beta | \overline{y(j)}} \cdot \probBigPar{E_j} \leq \\
&\probBigPar{\alpha | \overline{y(i)}} \cdot \probBigPar{\beta | \overline{y}} \cdot \probBigPar{E_i} + \probBigPar{\alpha | \overline{y(j)}} \cdot \probBigPar{\beta | \overline{y}} \cdot \probBigPar{E_j}.
\end{align*}

Combining the above inequality with Inequality~\eqref{eq:_cond_negative_association2}, we infer that:
\begin{align*}
\probBigPar{\alpha \wedge \beta | \overline{y}} &\leq \probBigPar{\alpha | \overline{y(i)}} \cdot \probBigPar{\beta | \overline{y}} \cdot \probBigPar{E_i}
                                           + \probBigPar{\alpha | \overline{y(j)}} \cdot \probBigPar{\beta | \overline{y}} \cdot \probBigPar{E_j} \\
                                &= \probBigPar{\beta | \overline{y}} \Big( \probBigPar{\alpha | \overline{y(i)}} \cdot \probBigPar{E_i} + \probBigPar{\alpha | \overline{y(j)}} \cdot \probBigPar{E_j} \Big) \\
                                &= \probBigPar{\beta | \overline{y}} \cdot \probBigPar{\alpha | \overline{y}}
\end{align*}
\end{description}
This proves the inductive step and completes the proof.
\end{proof}

\section{Useful Lemmas}\label{sec:useful_lemmas}

\begin{theorem}[Theorem 1.16 from~\cite{auger2011theory}]\label{thm:chernoff}
Let $X_1, X_2, \dots, X_n$ be negatively correlated binary random variables. Let $X = \sum_{i=1}^{n} X_i$. Then $X$ satisfies the Chernoff-Hoeffding bounds for $\delta \in [0,1]$:
\[ \probBigPar{X \leq (1-\delta)\expected{X}} \leq \left( \frac{e^{-\delta}}{(1-\delta)^{(1-\delta)}} \right)^\mu .\]
\end{theorem}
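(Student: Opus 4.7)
The plan is to apply Chernoff's exponential moment method---that is, Markov's inequality to $e^{-tX}$ for a well-chosen $t>0$---and use the negative correlation hypothesis exactly where independence would usually be invoked. Setting $\mu = \expected{X}$ and $p_i = \prob{X_i = 1}$, Markov yields
\[\probBigPar{X \leq (1-\delta)\mu} = \probBigPar{e^{-tX} \geq e^{-t(1-\delta)\mu}} \leq e^{t(1-\delta)\mu}\cdot \expected{e^{-tX}},\]
so the task reduces to bounding $\expected{\prod_i e^{-tX_i}}$ by the product-of-marginals $\prod_i \expected{e^{-tX_i}} = \prod_i\bigl(1-(1-e^{-t})p_i\bigr)$.

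This factorization is the single step that actually uses NC, and I would derive it by exploiting the binary structure. Writing $\beta = 1-e^{-t} \in (0,1)$, one checks that $e^{-tX_i} = (1-\beta) + \beta(1-X_i)$ for $X_i \in \{0,1\}$. Expanding the product,
\[\prod_{i=1}^{n} e^{-tX_i} = \sum_{S \subseteq [n]} \beta^{|S|}(1-\beta)^{n-|S|} \prod_{i \in S}(1-X_i),\]
so all coefficients are non-negative and each $\prod_{i \in S}(1-X_i)$ is the indicator of $\{X_i=0 : i \in S\}$. Taking expectations and applying the NC bound $\prob{\bigwedge_{i\in S}(X_i=0)} \leq \prod_{i \in S}(1-p_i)$ term-by-term gives
\[\expected{e^{-tX}} \leq \sum_S \beta^{|S|}(1-\beta)^{n-|S|} \prod_{i \in S}(1-p_i) = \prod_i\bigl((1-\beta)+\beta(1-p_i)\bigr) = \prod_i\bigl(1-(1-e^{-t})p_i\bigr).\]

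From here the rest is classical: use $1+x \leq e^x$ factor-by-factor to obtain $\prod_i\bigl(1-(1-e^{-t})p_i\bigr) \leq e^{-(1-e^{-t})\mu}$, plug back into Markov to get $\exp\!\bigl(\mu[t(1-\delta)-(1-e^{-t})]\bigr)$, and optimize by choosing $t = -\ln(1-\delta) > 0$ (valid since $\delta \in [0,1)$), which produces exactly $\left(e^{-\delta}/(1-\delta)^{1-\delta}\right)^\mu$.

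The main obstacle is the decoupling step. Naively expanding $\prod_i(1-\beta X_i)$ as monomials in the $X_i$'s yields alternating-sign terms $(-\beta)^{|S|}\prob{\bigwedge_{i\in S} X_i=1}$ on which the NC inequality cannot be used term-by-term. The re-expansion $1-\beta X_i = (1-\beta)+\beta(1-X_i)$ is what turns the sum into a non-negative combination, aligning it with the ``$X_i=0$'' half of the NC definition; this is where the lower-tail bound needs NC rather than just an MGF factorization. The same trick with $e^{tX_i} = 1+(e^t-1)X_i$ in place of $1-\beta X_i$ would symmetrically deliver the matching upper-tail bound through the ``$X_i=1$'' half of NC.
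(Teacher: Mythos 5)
The paper cites this bound as Theorem~1.16 of Auger--Doerr and gives no proof of its own, so there is nothing in the paper to compare against; I am evaluating your argument on its merits. It is correct, and the key decoupling step is handled properly. The potential trap you flag is real: negative correlation is only a family of inequalities on joint probabilities of events $\{X_i = b\}$, not a statement about expectations of arbitrary monotone functions, so one cannot simply assert $\expected{e^{-tX}} \leq \prod_i \expected{e^{-tX_i}}$ by waving at NC the way one could under full negative association. Your re-expansion $e^{-tX_i} = (1-\beta)+\beta(1-X_i)$ with $\beta = 1-e^{-t}\in(0,1)$ converts the product into a convex combination (non-negative weights summing to 1) of the indicators $\mathbbm{1}\{X_i=0 \colon i\in S\}$, so the ``all-zero'' half of NC applies term by term, and the bound collapses back to $\prod_i\bigl(1-(1-e^{-t})p_i\bigr)$. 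This is the standard Panconesi--Srinivasan change of basis; the remaining steps ($1+x\leq e^x$ and optimizing $t=-\ln(1-\delta)$, which gives exactly $\bigl(e^{-\delta}/(1-\delta)^{1-\delta}\bigr)^{\mu}$) are the textbook Chernoff calculation. One cosmetic point: for $\delta=1$ the optimal $t$ is $+\infty$, so the boundary case should be obtained by taking a limit or handled separately, but this is routine. Your closing observation that the upper tail goes through symmetrically via $e^{tX_i} = 1 + (e^t-1)X_i$ and the ``all-one'' half of NC is also correct.
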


\begin{lemma}\label{lem:sai_over_sbi_leq_max_ai_over_bi}
For any sequence $(a_i)_{i \in [n]}$ and $(b_i)_{i \in [n]}$, $b_i > 0$, it holds:
\[\frac{\sum_{i=1}^{n} a_i}{\sum_{i=1}^{n} b_i} \leq \max_{i \in \{1,2,\dots,n\}}\frac{a_i}{b_i}.\]
\end{lemma}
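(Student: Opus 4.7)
The plan is to prove this by a direct reduction to the defining property of the maximum. Let $M := \max_{i \in [n]} a_i/b_i$. By definition of $M$, for each index $i$ we have $a_i/b_i \leq M$, and since $b_i > 0$ this rearranges to $a_i \leq M\, b_i$ without any sign issues.

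Next I would sum these $n$ inequalities to obtain $\sum_{i=1}^n a_i \leq M \sum_{i=1}^n b_i$. Because every $b_i$ is strictly positive, the denominator $\sum_{i=1}^n b_i$ is strictly positive as well, so I can divide through by it and preserve the inequality, yielding the claimed bound $\sum_i a_i / \sum_i b_i \leq M$.

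There is essentially no obstacle: the only subtlety is making sure the bound $a_i/b_i \leq M$ is used in the form $a_i \leq M b_i$ (so that the argument works even if some $a_i$ are negative, as may happen in the application to the conditional expectations appearing in the main approximation analysis). The assumption $b_i > 0$ guarantees both that each individual ratio is well-defined and that the final division is valid, so no further case distinction is needed.
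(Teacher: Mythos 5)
Your proof is correct and is essentially the same argument as the paper's: both rest on the pointwise bound $a_i \leq M b_i$ (using $b_i > 0$), the paper phrasing it as a convex combination of the ratios $a_i/b_i$ weighted by $b_j/\sum_i b_i$, and you phrasing it by summing the $n$ inequalities and dividing by the positive denominator. The two are algebraically identical, so there is nothing to add.
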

\begin{proof}
\begin{eqnarray*}
\frac{\sum_{i=1}^{n} a_i}{\sum_{i=1}^{n} b_i} &=& \sum_{j=1}^{n}\frac{a_j}{\sum_{i=1}^{n} b_i} = \sum_{j=1}^{n}\frac{a_j}{b_j} \cdot \frac{b_j}{\sum_{i=1}^{n} b_i} \leq \sum_{j=1}^{n} \left(\max_{i \in \{1,2,\dots,n\}}\frac{a_i}{b_i}\right) \frac{b_j}{\sum_{i=1}^{n} b_i} =\\
&=& \left(\max_{i \in \{1,2,\dots,n\}}\frac{a_i}{b_i}\right) \sum_{j=1}^{n} \frac{b_j}{\sum_{i=1}^{n} b_i} = \max_{i \in \{1,2,\dots,n\}}\frac{a_i}{b_i}.
\end{eqnarray*}
\end{proof}

\begin{lemma}\label{lem:ca_over_cb_leq_ab}
For any non-decreasing sequence $(c_i)_{i \in \{1,2,\dots,n\}}$, $c_i >0$ and any non-increasing sequence $(a_i)_{i \in \{1,2,\dots,n\}}$ it holds:
\[\frac{\sum_{i=1}^{n} a_i c_i}{\sum_{i=1}^{n} c_i} \leq \frac{1}{n} \sum_{i=1}^{n} a_i.\]
\end{lemma}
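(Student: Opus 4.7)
The plan is to recognize the statement as a form of Chebyshev's sum inequality and derive it from a non-negative expansion. The intuition is clear: since $(c_i)$ is non-decreasing and places more weight on larger indices where $a_i$ is small, the $c$-weighted mean of $a$ must be at most the unweighted mean. I will make this precise via a pairwise argument.

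First, I would consider the double sum
\[ S = \sum_{i=1}^{n} \sum_{j=1}^{n} (a_i - a_j)(c_i - c_j). \]
For every pair $(i,j)$, the factors $(a_i - a_j)$ and $(c_i - c_j)$ have opposite signs (or at least one is zero): if $i < j$ then $a_i \geq a_j$ while $c_i \leq c_j$, and the reverse when $i > j$. Hence each summand is non-positive and $S \leq 0$.

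Next I would expand $S$ term by term. Since the indices $i$ and $j$ are symmetric, the cross-products telescope nicely:
\[ S = 2n \sum_{i=1}^{n} a_i c_i \;-\; 2 \left(\sum_{i=1}^{n} a_i\right)\left(\sum_{i=1}^{n} c_i\right). \]
Combining with $S \leq 0$ gives $n \sum_i a_i c_i \leq \left(\sum_i a_i\right)\left(\sum_i c_i\right)$. Dividing both sides by $n \sum_i c_i$, which is strictly positive by the assumption $c_i > 0$, yields the claimed inequality.

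There is no real obstacle here; the only care needed is to make sure the sign analysis of $(a_i - a_j)(c_i - c_j)$ is stated cleanly (covering the $i = j$ case trivially) and that the division at the end is justified by $\sum_i c_i > 0$. The lemma follows in a couple of lines.
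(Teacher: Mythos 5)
Your proof is correct. You recognize the statement as an instance of Chebyshev's sum inequality and prove it directly via the symmetric double sum $S = \sum_{i,j}(a_i - a_j)(c_i - c_j) \leq 0$, which expands to $2n\sum_i a_i c_i - 2(\sum_i a_i)(\sum_i c_i)$. The paper instead proceeds by induction on $n$: it assumes the inequality for $n-1$, clears denominators, and shows the difference at step $n$ is controlled by the non-negative correction term $\sum_{i=1}^{n-1}(a_i - a_n)(c_n - c_i)$. The two arguments are close cousins --- both hinge on the observation that oppositely monotone differences have non-positive product --- but yours packages all the pairwise terms at once while the paper peels them off one index at a time. Your version is shorter and is the textbook Chebyshev proof; the paper's version avoids needing to manipulate a symmetric double sum and is perhaps easier to check line by line, at the cost of an induction. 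Either is a complete and valid proof of the lemma.
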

\begin{proof}
We prove that by induction. Clearly, we have equality for $n=1$. We assume that
\begin{equation*}
  \frac{\sum_{i=1}^{n-1} a_i c_i}{\sum_{i=1}^{n-1} c_i} \leq \frac{1}{n-1} \sum_{i=1}^{n-1} a_i.
\end{equation*}
It is equivalent to
\begin{equation}\label{ineq:lem_seq_inductive_assum}
  (n-1) \cdot \sum_{i=1}^{n-1} a_i c_i \leq \left(\sum_{i=1}^{n-1} a_i\right)\cdot\left(\sum_{i=1}^{n-1} c_i\right).
\end{equation}
We would like to show that
\[ n \cdot \sum_{i=1}^{n} a_i c_i \leq \left(\sum_{i=1}^{n} a_i\right)\cdot\left(\sum_{i=1}^{n} c_i\right). \]
We have the following equivalent inequalities:
\begin{eqnarray*}
  0 &\leq& \left(\sum_{i=1}^{n-1} a_i\right)\cdot\left(\sum_{i=1}^{n-1} c_i\right) + a_n \cdot \sum_{i=1}^{n-1} c_i + c_n \cdot \sum_{i=1}^{n-1} a_i + a_n \cdot c_n - n \cdot \sum_{i=1}^{n-1} a_i c_i - n \cdot a_n \cdot c_n, \\
  0 &\leq& \left[ \left( \sum_{i=1}^{n-1} a_i \right) \cdot \left( \sum_{i=1}^{n-1} c_i \right) - (n-1) \cdot \sum_{i=1}^{n-1} a_i c_i \right] + \sum_{i=1}^{n-1} \left(a_n \cdot c_i + c_n \cdot a_i - a_n \cdot c_n - a_i \cdot c_i \right), \\
  0 &\leq& \left[ \left( \sum_{i=1}^{n-1} a_i \right) \cdot \left( \sum_{i=1}^{n-1} c_i \right) - (n-1) \cdot \sum_{i=1}^{n-1} a_i c_i \right] + \sum_{i=1}^{n-1} \left( a_i - a_n \right)\left( c_n - c_i \right). \\
\end{eqnarray*}
Using the inductive assumption~\eqref{ineq:lem_seq_inductive_assum} and monotonicity of sequences, i.e., $0 \leq a_i - a_n$, $0 \leq c_n - c_i$ we finish the proof.
\end{proof}

\begin{lemma}\label{lemma:simple_inequality}
Let $a_1, a_2, b_1, b_2, c_1, c_2 \in \reals$ be such that $a_1 \geq a_2$, $c_1 \geq b_1$, $b_2 \geq c_2$, and $b_1 + b_2 = c_1 + c_2$. It holds that:
$a_1 c_1 + a_2 c_2 \geq a_1 b_1 + a_2 b_2$.
\end{lemma}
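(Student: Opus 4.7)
The plan is to reduce the inequality to the product of two obviously non-negative quantities. First I would rewrite the target inequality as $a_1(c_1 - b_1) + a_2(c_2 - b_2) \geq 0$ by moving everything to one side. Next, I would use the constraint $b_1 + b_2 = c_1 + c_2$, which can be rearranged to $c_1 - b_1 = b_2 - c_2$. Letting $\delta$ denote this common value, the hypotheses $c_1 \geq b_1$ and $b_2 \geq c_2$ both give $\delta \geq 0$.

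Substituting, the left side becomes $a_1 \delta - a_2 \delta = (a_1 - a_2)\delta$. Since $a_1 \geq a_2$ gives $a_1 - a_2 \geq 0$ and we have already established $\delta \geq 0$, the product is non-negative, which is exactly what we wanted to show. There is no real obstacle here; the only subtle point is recognizing that the equality constraint $b_1+b_2 = c_1+c_2$ together with the two inequality hypotheses forces the single parameter $\delta$ to capture the entire freedom in the problem, after which the monotonicity assumption on the $a_i$ immediately finishes the argument.
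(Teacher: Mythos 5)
Your proof is correct and matches the paper's argument: both identify the common nonnegative difference $\delta = c_1 - b_1 = b_2 - c_2$ and combine it with $a_1 \geq a_2$ to conclude $a_1(c_1-b_1)+a_2(c_2-b_2)=(a_1-a_2)\delta \geq 0$. Naming $\delta$ explicitly is a minor stylistic simplification, not a different route.
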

\begin{proof}
We have that $b_2 - c_2 = c_1 - b_1 \geq 0$, and so $a_2(b_2 - c_2) \leq a_1(c_1 - b_1)$, which can be reformulated as $a_1(c_1 - b_1) + a_2(c_2 - b_2) \geq 0$. Thus:
\begin{align*}
a_1 c_1 + a_2 c_2 = a_1 b_1 + a_2 b_2 + a_1 (c_1 - b_1) + a_2 (c_2 - b_2) \geq a_1 b_1 + a_2 b_2 \text{.}
\end{align*}
\end{proof}

\section{Omitted Proofs from Section~\ref{sec:main_approximation}}\label{appen:omitted_:main_approximation}

\begin{lemma}\label{lem:bin_neg_asoc_Zr}
Distribution of $\{Z_1,Z_2, \dots, Z_{\nicefrac{k}{\epsilon}} \}$ satisfies Binary Negative Association.
\end{lemma}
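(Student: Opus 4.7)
The plan is to exhibit the vector $(Z_r)_{r \in \{1,\ldots,k/\epsilon\}}$ as the outcome of the dependent rounding procedure applied to the flat vector of $\epsilon$-subinterval weights, using a tournament tree organized in two nested levels; once this representation is in place, BNA follows directly from Theorem~\ref{thm:dep_rounding_bna}.

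First, I would construct the tournament tree. Its $k/\epsilon$ leaves are indexed by the subinterval indices $r$, each holding initial value $\epsilon$. Between the leaves and the root I would insert $m$ \emph{inner} subtrees, one per facility: the $i$-th inner subtree has exactly the leaves indexed by $\sub(i)$, whose values sum to $y_i^*$. The roots of these $m$ inner subtrees are then combined into an \emph{outer} tournament tree of arbitrary fixed shape. This is a well-defined tournament tree, so Theorem~\ref{thm:dep_rounding_bna} immediately yields BNA for the output of DR executed on it.

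The remaining task is to verify that the output of this DR execution has exactly the joint distribution of $(Z_r)$ described in the paper. Within the inner subtree of facility $F_i$ all operations preserve the sum $y_i^*\le 1$, so after the subtree is fully processed a unique leaf $r \in \sub(i)$ survives with value $y_i^*$ while the remaining leaves are zero. By the marginals property of DR, each $r \in \sub(i)$ is this survivor with probability $\epsilon/y_i^*$, i.e.\ uniformly at random within $\sub(i)$; moreover, the random choices made inside different inner subtrees are independent because they consume disjoint coin flips. The subsequent outer DR then operates on the vector of surviving values, which by construction equals $(y_i^*)_{i\in[m]}$, and rounds them to $0/1$ producing exactly the marginals $\Pr[Y_i=1]=y_i^*$ expected by the algorithm. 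Composing these two levels reproduces both parts of Equation~\eqref{eq:defZr}: $Y_i=\sum_{r\in\sub(i)}Z_r$ and, conditional on $Y_i=1$, exactly one $r\in\sub(i)$ has $Z_r=1$, chosen uniformly. In particular $\Pr[Z_r=1]=\epsilon$ as in Equation~\eqref{eq:pr_zr1}.

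The only subtle point is the equivalence claim in the previous paragraph; everything else is an immediate appeal to prior results. The delicate part is making sure that uniform choice of the representative within each facility, and independence of these choices across facilities, really are consequences of running DR on the nested tree, rather than additional assumptions. These both follow from (i) marginals preservation of DR, applied inside a single inner subtree, and (ii) the fact that disjoint subtrees of a tournament tree use disjoint randomness. With this equivalence established, BNA of $(Z_r)$ is precisely the statement of Theorem~\ref{thm:dep_rounding_bna} applied to our tree, completing the proof.
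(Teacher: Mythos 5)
Your proof is correct and takes essentially the same route as the paper's own proof sketch: both view the $(Z_r)$ variables as the leaves of a two-level (inner subtrees per facility, then an outer tree over facilities) tournament, and both conclude BNA by appealing to Theorem~\ref{thm:dep_rounding_bna}. You actually spell out the equivalence check---that the inner DR picks the representative uniformly via marginal preservation, that the inner choices across facilities are independent, and that the outer DR then sees the vector $(y_i^*)$---which the paper states without detail, so your write-up is a somewhat more complete version of the paper's argument.
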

\begin{proof}[Proof sketch]
Note that DR procedure on $(Y_i)_{i \in [m]}$ and then independent choice of $(Z_r)_{r \in \sub(i)}$ for each $i \in [m]$ is equivalent to the following implementation of DR on $(Z_r)_{r \in \{1,2,\dots,\nicefrac{k}{\epsilon}\}}$. First, for each $i \in [m]$ $(Z_r)_{r \in \sub(i)}$ are processed until obtaining a single non-zero variable that is equivalent to $y_i$. Then, in the second phase the rounding proceeds as if it had started from the $y_i$ variables. 
Since this process altogether is an implementation of a single DR procedure with fixed tournament tree starting from $(Z_r)_{r \in \{1,2,\dots,\nicefrac{k}{\epsilon}\}}$ variables, we can simply apply Theorem~\ref{thm:dep_rounding_bna} and get the statement of the lemma.

At this point we note that the result of Dubhashi et al.~\cite{DubhashiJR07} is not sufficient for proving our lemma. They have proved that DR following a predefined order of variables (which can be viewed as a linear tournament tree) returns distributions satisfying the CNA property. Here, however, we need to have at least a "two-stage" linear tournament: the first linear tournament on variables $(Z_r)_{r \in \sub(i)}$ and the second tournament on winning variables from the first tournament.
\end{proof}

\begin{proof}[Proof of Lemma~\ref{lem:Hr_chernoff}]
Let us fix $\ell \in [k]$, $t \in [\ell-1]$ and $r \in \{\nicefrac{(\ell-1)}{\epsilon}+1,\nicefrac{(\ell-1)}{\epsilon}+2,\dots,\nicefrac{\ell}{\epsilon}\}$. We have
\begin{eqnarray}
H_r(t-1) &\stackrel{\eqref{eq:Hr_def}}{=}& \probBigPar{\Sum_{r'=1}^{r-1} Z_{r'} \leq t-1 \;\bigg|\; Z_{r} = 1} = \probBigPar{\Sum_{r'=r}^{\nicefrac{k}{\epsilon}} Z_{r'} \geq k-(t-1) \;\bigg|\; Z_{r} = 1} = \nonumber\\
&=& \probBigPar{\Sum_{r'=r+1}^{\nicefrac{k}{\epsilon}} Z_{r'} \geq k-t \;\bigg|\; Z_{r} = 1}.\label{ineq:Fz1ub}
\end{eqnarray}
We now exploit Binary Negative Association of variables $Z_i$ (Lemma~\ref{lem:bin_neg_asoc_Zr}). By setting $S = \{r+1,r+2, \dots, \nicefrac{k}{\epsilon}\}, Q = \{r\}, f(a_1,a_2, \dots a_s) = \mathbbm{1}\Big\{\sum_{i=1}^{|S|} a_i \geq k-t\Big\}$ and $g(a) = a$ we obtain:

\[ 0 \geq \cov\big[f(Z_{r'}\colon r' \in S), g(Z_{r'}\colon r' \in Q) \big] = \cov\left[\mathbbm{1}\left\{\sum_{r'=r+1}^{\nicefrac{k}{\epsilon}} Z_{r'} \geq k-t \right\}, Z_{r} \right]. \]
Since $f,g$ are binary and non-decreasing we can use Lemma~\ref{lem:cov_for_binary} to obtain an equivalent inequality:
\begin{equation}\label{ineq:pr_con_mult_pr} 
\probBigPar{\sum_{r'=r+1}^{\nicefrac{k}{\epsilon}} Z_{r'} \geq k-t \quad \wedge \quad Z_{r} = 1} \leq \probBigPar{\sum_{r'=r+1}^{\nicefrac{k}{\epsilon}} Z_{r'} \geq k-t } \cdot \probBigPar{Z_{r} = 1}.
\end{equation}
Therefore,
\begin{eqnarray}
H_r(t-1) &\stackrel{\eqref{ineq:Fz1ub}}{\leq}& \probBigPar{\Sum_{r'=r+1}^{\nicefrac{k}{\epsilon}} Z_{r'} \geq k-t \;\bigg|\; Z_{r} = 1}  \nonumber\\
&=& \frac{\probBigPar{\Sum_{r'=r+1}^{\nicefrac{k}{\epsilon}} Z_{r'} \geq k-t \quad \wedge \quad Z_{r} = 1}}{\probBigPar{Z_{r} = 1}}
\nonumber\\
&\stackrel{\eqref{ineq:pr_con_mult_pr}}{\leq}& \probBigPar{\Sum_{r'=r+1}^{\nicefrac{k}{\epsilon}} Z_{r'} \geq k-t} = \probBigPar{\Sum_{r'=1}^{r} Z_{r'} \leq t}. \label{ineq:Fzm1_remove_conditioning}
\end{eqnarray}
Using Lemma~\ref{lem:bin_neg_asoc_Zr} and Lemma~\ref{lem:bna_implies_nc} we know that $(Z_r)_{r \in \{1,2,\dots,\nicefrac{k}{\epsilon}\}}$ are negatively correlated. What is more, $t$ is smaller than the expected value of the sum
\[ t \leq \ell-1 = (\ell-1+\epsilon)-\epsilon \leq r \cdot \epsilon - \epsilon < r \cdot \epsilon \stackrel{\eqref{eq:pr_zr1}}{=} \expectedBigPar{\Sum_{r'=1}^{r} Z_{r'}}, \]
Therefore, we can use Chernoff-Hoeffding bounds as follows
\begin{eqnarray*} H_r(t-1) \hspace{-0.8cm} &\stackrel{\eqref{ineq:Fzm1_remove_conditioning}}{\leq}& \probBigPar{\Sum_{r'=1}^{r} Z_{r'} \leq t } \\
&=& \probBigPar{\Sum_{r'=1}^{r} Z_{r'} < r \cdot \epsilon \cdot \left(1-\left(1-\frac{t}{r \cdot \epsilon}\right)\right) } \\
&\stackrel{\text{Theorem \ref{thm:chernoff}}}{\leq}& \left(\frac{e^{\frac{t}{r \cdot \epsilon}-1}}{\left(\frac{t}{r \cdot \epsilon}\right)^{ \frac{t}{r \cdot \epsilon} }} \right)^{r \cdot \epsilon} = \frac{e^{t-r \cdot \epsilon} \cdot (r \cdot \epsilon)^{t}}{t^t} \\
&=& e^{-r \cdot \epsilon} \cdot \left(\frac{e \cdot r \cdot \epsilon}{t} \right)^t.
\end{eqnarray*}

\end{proof}

\begin{proof}[Proof of Lemma~\ref{lem:final_apx_ratio}]
\begin{eqnarray}
\frac{\expected{\cost_j(Y)}}{\OPTLP_j} \hspace{-0.3cm} &\stackrel{\eqref{ineq:approx_ratio},\eqref{eq:Er},\eqref{ineq:Er}}{\leq}& \hspace{0.3cm}\max_{\ell \in [k]} \; \epsilon \cdot \ell \cdot \hspace{-0.5cm} \Sum_{r=\nicefrac{(\ell-1)}{\epsilon} + 1}^{\nicefrac{\ell}{\epsilon}} \; \left( \Sum_{t=1}^{\ell-1} \left( \frac{1}{t(t+1)} \cdot H_r(t-1) \right) + \frac{1}{\ell} \right) \nonumber\\
&=& \hspace{-0.3cm} 1+ \max_{\ell \in [k]} \; \epsilon \cdot \ell \cdot \hspace{-0.5cm} \Sum_{r=\nicefrac{(\ell-1)}{\epsilon} + 1}^{\nicefrac{\ell}{\epsilon}} \left( \Sum_{t=1}^{\ell-1} \frac{1}{t(t+1)} \cdot H_r(t-1) \right) \nonumber\\
&\stackrel{\text{Lemma~\ref{lem:Hr_chernoff}}}{\leq}& \hspace{-0.3cm} 1 + \max_{\ell \in [k]} \; \epsilon \cdot \ell \cdot \hspace{-0.5cm} \Sum_{r = \nicefrac{(\ell-1)}{\epsilon}+1}^{\nicefrac{\ell}{\epsilon}} \; \left( \Sum_{t=1}^{\ell-1} \frac{1}{t(t+1)} \cdot e^{-r \cdot \epsilon} \cdot \left(\frac{e \cdot r \cdot \epsilon}{t} \right)^t  \right) \nonumber \\
&=& \hspace{-0.6cm} 1 + \max_{\ell \in [k]} \; \ell \cdot \Sum_{t=1}^{\ell-1} \frac{1}{t(t+1)} \cdot \frac{e^t}{t^t} \cdot \left( \Sum_{r = \nicefrac{(\ell-1)}{\epsilon}+1}^{\nicefrac{\ell}{\epsilon}} \; \epsilon \cdot e^{-r \cdot \epsilon} \cdot \left( r \cdot \epsilon \right)^{t} \right) \nonumber\\
&=& \hspace{-0.6cm} 1 + \max_{\ell \in [k]} \; \ell \cdot \Sum_{t=1}^{\ell-1} \frac{1}{t(t+1)} \cdot \frac{e^t}{t^t} \cdot \left( \Sum_{r=\nicefrac{(\ell-1)}{\epsilon} + 1}^{\nicefrac{\ell}{\epsilon} - 1} \; \int_{r \cdot \epsilon - \epsilon}^{r \cdot \epsilon} e^{-r \cdot \epsilon} \cdot \left( r \cdot \epsilon \right)^{t} dx\right) \leq \nonumber
\end{eqnarray}
we now use an upper bound on the most interior sum by an integral of the function $f_t(x) = e^{-x} \cdot x^t$. Note that $f'_t(x) = e^{-x} \cdot x^{t-1} \cdot (t-x) \leq 0$ for $1 \leq t \leq \ell-1 \leq x$, so the function $f$ is non-increasing. Therefore
\begin{eqnarray}
&\leq& 1 + \max_{\ell \in [k]} \; \ell \cdot \Sum_{t=1}^{\ell-1} \frac{1}{t(t+1)} \cdot \frac{e^t}{t^t} \cdot \left( \int_{\ell-1}^{\ell} e^{-x} \cdot x^t dx\right).\label{ineq:ratio_with_integral}
\end{eqnarray}
To bound the above expression we first numerically evaluate it for $\ell \in \{1,2,\dots,88\}$ and obtain
\[ 1 + \max_{\ell \in \{1,2,\dots,88\}} \; \ell \cdot \Sum_{t=1}^{\ell-1} \frac{1}{t(t+1)} \cdot \frac{e^t}{t^t} \cdot \left( \int_{\ell-1}^{\ell} e^{-x} \cdot x^t dx \right) < \ratiohkm. \]
It remains to bound the expression for $\ell \in \{89,90,\dots,k\}$, which we do by the following estimation:
\begin{eqnarray*}
1 + \ell \cdot \Sum_{t=1}^{\ell-1} \frac{1}{t(t+1)} \cdot \frac{e^t}{t^t} \cdot \left( \int_{\ell-1}^{\ell} e^{-x} \cdot x^t dx \right) \hspace{-1.3cm} &\leq& \hspace{-1cm} 1 + \ell \cdot \Sum_{t=1}^{\ell-1} \frac{1}{t(t+1)} \cdot \frac{e^t}{t^t} \cdot e^{-(\ell-1)} \cdot \ell^t \\
          &\stackrel{\text{Stirling}}{\leq}& \hspace{-1cm} 1 + \ell \cdot \Sum_{t=1}^{\ell-1} \frac{1}{t(t+1)} \cdot \frac{\sqrt{2 \pi t} \cdot e^{\frac{1}{12t}}}{t!} \cdot e^{-(\ell-1)} \cdot \ell^t \\
          &\leq& \hspace{-1cm} 1 + \sqrt{2 \pi} \cdot e^{\frac{1}{12}} \cdot e^{-(\ell-1)} \cdot \frac{1}{\sqrt{\ell}} \cdot \Sum_{t=1}^{\ell-1} \frac{\ell^{t+1}}{(t+1)!} \cdot \frac{\sqrt{\ell}}{\sqrt{t}} \\
          &\leq& \hspace{-1cm} 1 + \sqrt{2 \pi} \cdot e^{\frac{13}{12}} \cdot e^{-\ell} \cdot \frac{1}{\sqrt{\ell}} \cdot \Sum_{t=1}^{\ell-1} \frac{\ell^{t+1}}{(t+1)!} \cdot \frac{\ell}{t} \\
          &\leq& \hspace{-1cm} 1 + 3\sqrt{2 \pi} \cdot e^{\frac{13}{12}} \cdot e^{-\ell} \cdot \frac{1}{\sqrt{\ell}} \cdot \Sum_{t=1}^{\ell-1} \frac{\ell^{t+1}}{(t+1)!} \cdot \frac{\ell}{t+2} \\
          &\stackrel{\text{Taylor series for~}e^\ell}{\leq}& \hspace{-0.1cm} 1 + 3\sqrt{2 \pi} \cdot e^{\frac{13}{12}} \cdot e^{-\ell} \cdot \frac{1}{\sqrt{\ell}} \cdot e^\ell \\
          &=& \hspace{-1cm} 1 + 3\sqrt{2 \pi} \cdot e^{\frac{13}{12}} \cdot \frac{1}{\sqrt{\ell}} < 2.3551 < \ratiohkm.
\end{eqnarray*}

The maximum is obtained for $\ell=4$ (see Figure~\ref{fig:apx_ratio}).

\end{proof}

\begin{figure}[th]
  \hspace{-2cm}
  \begin{minipage}[c]{0.9\textwidth}
      \includegraphics[trim=0 20 0 0, clip, scale=0.6]{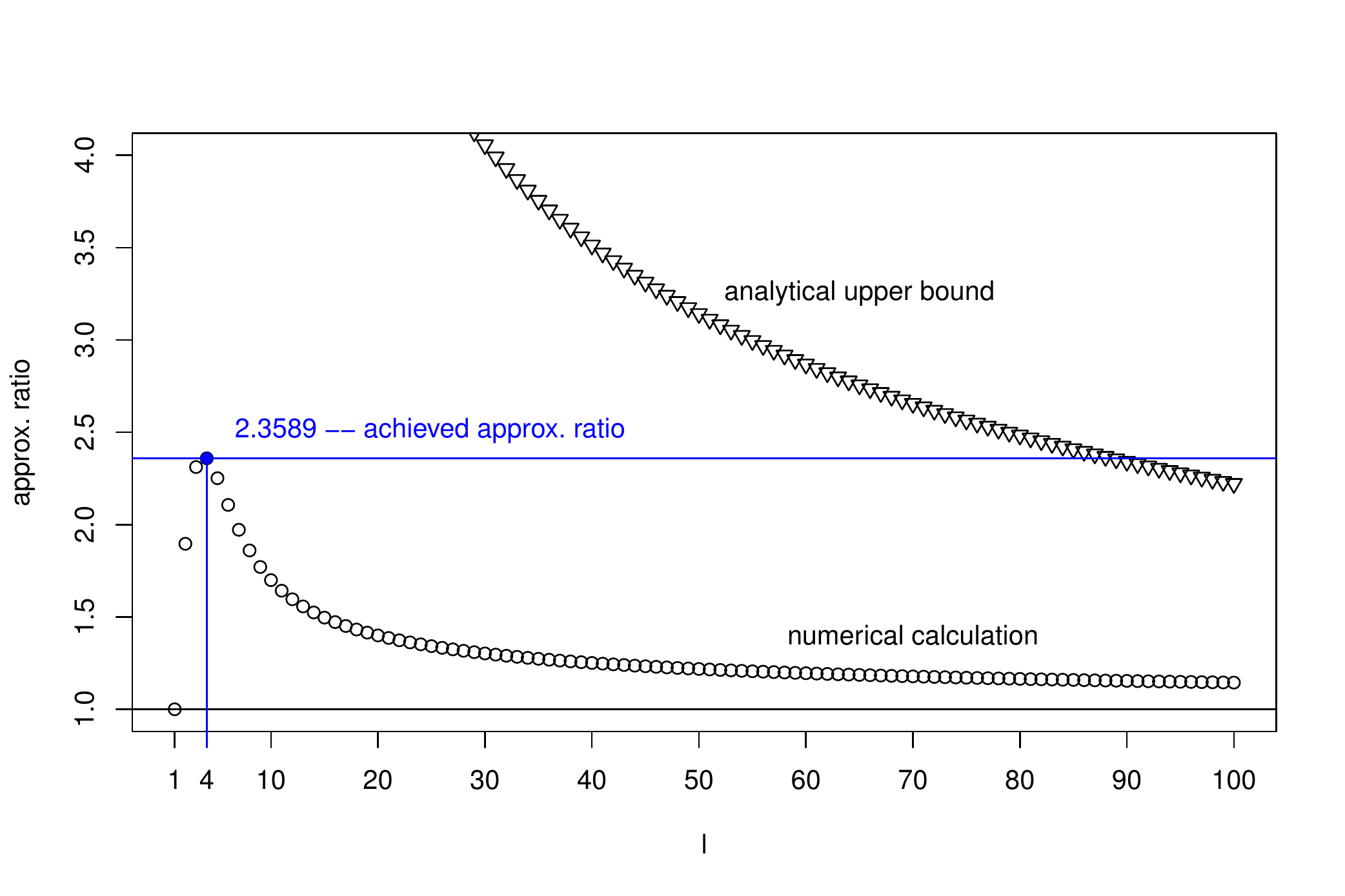}
  \end{minipage}
  \hspace{-2cm}
  \begin{minipage}[c]{0.05\textwidth}
  \vspace{0.9cm}
  \begin{tabular}{|l|l|}
    \hline
 {\bf $\ell$} & {\bf ratio}  \\ \hline
 1         & 1    \\ \hline
 2         & 1.90 \\ \hline
 3         & 2.32 \\ \hline
 {\bf\textcolor{blue}{4}}   & {\bf\textcolor{blue}{\ratiohkm}} \\ \hline
 5         & 2.26 \\ \hline
 6         & 2.11 \\ \hline
 7         & 1.98 \\ \hline
 8         & 1.86 \\ \hline
 9         & 1.78 \\ \hline
 10        & 1.70 \\ \hline
 11        & 1.65 \\ \hline
 12        & 1.60 \\ \hline
  \end{tabular}
\end{minipage}
\caption{The numerical and the analytical upper bound on the approximation ratio on intervals $(\ell-1,\ell)$, for each $\ell \in [k]$.}
\label{fig:apx_ratio}
\end{figure}

\section{Omitted Proofs from Section~\ref{sec:reduction_to_fault_tol}}
\label{sec:apxD}

\begin{proof}[Proof of Theorem~\ref{thm:93-apx-for-multi}]
We reduce an instance of \probftkmedmulti to an instance of \probftkmed by replacing the multiple $m_j$ of a client $D_j$ with $m_j$ clients in the same location and with the same connectivity requirement (we will call such clients clones of $D_j$). Observe that there exists an optimal solution in which each clone of the same client is connected to the same set of open facilities. Next, we run the $93$-approximation algorithm of Hajiaghayi~et~al.~\cite{HajiaghayiHLLS16} on such a constructed instance with clones. It is apparent that the solution that we obtain by following this procedure approximates the original instance with the ratio of~93. However, the issue is that $m_j$ can be exponential in the number of clients in the original instance, and so the most straightforward implementation of our reduction does not run in polynomial-time. To deal with that we will efficiently encode the reduced instance, and we will show that the algorithm of Hajiaghayi~et~al. can be adapted to run on such encoded instances. We proceed as follows.

First, we solve the LP part of the original algorithm~\cite{HajiaghayiHLLS16} with the additional multiplicative factors $\{m_j\}_{D_j \in \mathcal{D}}$ added to the objective function. From the solution to the LP, $(y_i)_{i \in [m]}$ with $\sum_{i=1}^m y_i = k$, we construct an optimal assignment of the clients to the facilities. We encode such an assignment efficiently by grouping all clones of the same client into a single cluster and storing an assignment for a single client for each cluster only (we call such a client the {\it representative} of the cluster). In particular, note that all clones in the same cluster have the same assignments and so, they all have the same average and maximal assignment costs. We use this property in the next step of the original algorithm: {\it creating bundles} of volume $1$~\cite[Algorithm~1]{HajiaghayiHLLS16}. By a careful analysis of this algorithm we can observe that no new bundles are created for a cloned client (lines 5 and 6 of~\cite[Algorithm~1]{HajiaghayiHLLS16}) and so that the cloned clients can be considered in bunches. 

Next, as in the original algorithm, we divide the clients into {\it safe} and {\it dangerous} by the criterion on the ratio of the maximal and the average cost in the assignment vector. Intuitively, if the maximum is much higher than the average then the client is marked as dangerous (for a formal definition see~\cite[Section 2.2]{HajiaghayiHLLS16}), otherwise it is considered safe. Hence, the clones of the same client are either all safe or all dangerous. In the latter case they are also {\it in conflict}: they are close and they have the same connectivity requirements (for a definition also see~\cite[Section 2.2]{HajiaghayiHLLS16}). Thus, in the {\it filtering phase}~\cite[Algorithm~2]{HajiaghayiHLLS16}  either all the dangerous clones of the same client are filtered out or exactly one of them survives; without loss of generality we can assume that the representative of the cluster survives. In fact, this is the main reason why we can quite easy adapt the algorithm. The next step, that is building a laminar family~\cite[Algorithm~3]{HajiaghayiHLLS16}, is independent on clients that were filtered out, and so it can be performed on our efficiently encoded instance. The safe clients are not used later on by the algorithm (they are only the side effect of creating bundles and later on they only appear in the algorithm's analysis).
Finally, the rounding process of the algorithm (\cite[Section 2.3]{HajiaghayiHLLS16}) depends on the set of constructed bundles and on the set of filtered dangerous clients (and the induced laminar family), and as we discussed it is possible to construct each of the two families with efficient encoding. This completes the proof.

\end{proof}

\begin{proof}[Proof of Lemma~\ref{lem:costftkmi-gen}]
Let $C$ be an $\alpha$-approximate solution to $I'$. By $\ftkmedmulti(C,j)$ we denote be the total cost of the clients $D_{j,1}, D_{j,2}, \dots, D_{j,k}$ constructed through reduction from Figure~\ref{fig:reduction}. 
Similarly, let $\owakmed(C,j)$ be the cost of the client $D_j$ for $C$ in $I$.
For each client $D_j$ we have:
\begin{align*}
 \ftkmedmulti(C,j) &=\sum_{r = 1}^k m_{j,r} \cdot \left( \sum_{i=1}^r c_{i}^\rightarrow(C,j) \right) = \sum_{r = 1}^k \sum_{i=1}^r m_{j,r} \cdot c_{i}^\rightarrow(C,j) \\
 & = \sum_{r = 1}^k \sum_{i=1}^r m_{j,r} \cdot c_{i}^\rightarrow(C,j) = \sum_{i = 1}^k \sum_{r=i}^k m_{j,r} \cdot c_{i}^\rightarrow(C,j) \\
 &= \sum_{i = 1}^k \left( c_{i}^\rightarrow(C,j) \cdot \sum_{r=i}^k m_{j,r} \right) \\
 &= \sum_{i = 1}^k c_{i}^\rightarrow(C,j) \cdot w_i \cdot Q = Q \cdot \owakmed(C,j).
\end{align*}
Let $C^*_I$ and  $C^*_{I'}$ be optimal solutions for $I$ and $I'$, respectively. By the same reasoning, we have that:
\begin{align*}
 \ftkmedmulti(C^*_{I},j) = Q \cdot\owakmed(C^*_{I},j).
\end{align*}
And, thus, that:
\begin{align*}
  &\sum_{D_j \in \mathcal{D}} \owakmed(C,j) = \sum_{D_j \in \mathcal{D}} \frac{1}{Q}\ftkmedmulti(C,j) \\ &\leq \alpha\frac{1}{Q}\sum_{D_j \in \mathcal{D}} \ftkmedmulti(C^{*}_{I'},j) \leq \alpha\frac{1}{Q} \sum_{D_j \in \mathcal{D}} \ftkmedmulti(C^{*}_{I},j) \\ &= \alpha \sum_{D_j \in \mathcal{D}} \owakmed(C^{*}_{I},j) \text{.}
\end{align*}
This completes the proof.
\end{proof}

\section{Hardness of approximation}\label{sec:hardness_general}
In the main text we have shown that the \probowakmedian problem with the harmonic sequence of weights admits very good approximations.
In this section we show that for many other natural sequences of weights, the considered problem is hard to approximate, unless we introduce additional assumptions, such as the assumptions that the costs satisfy the triangle inequality. Our hardness results hold already for 0/1 costs.

Let us start by considering the \probowakmedian problem for a certain specific class of weights. For each $k \in \naturals$, let $w^{(k)} = \big(w^{(k)}_1, \ldots, w^{(k)}_k \big)$ be a sequence of weights used in \probowakmedian.
Fix $\lambda \in (0, 1)$. We say that the clients care only about the $\lambda$-fraction of facilities if for each $k$ it holds that $w^{(k)}_i = 0$ whenever $i > \lambda k$. For instance, we say that the clients care only about 90\% of facilities if the cost of each client from a set $C$ does not depend on the $10\%$ of worst facilities in $C$.  

First, we prove a simple result which says that if there exists $\lambda$ such that the clients care only about the $\lambda$-fraction of facilities, and if
the costs of clients from facilities can be arbitrary, in particular if they cannot be represented as distances satisfying the triangle inequality, then the problem does not admit any approximation.  

\begin{theorem}\label{thm:cc_no_approximation}
Fix $\lambda \in (0, 1)$ and consider the problem \probowakmedian where clients care only about the $\lambda$-fraction of facilities.
For any positive computable function $\alpha$, there exists no polynomial-time $\alpha$-approximation algorithm for \probowakmedian, unless $\p = \np$.
\end{theorem}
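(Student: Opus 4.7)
The plan is a gap-introducing reduction from \probsetcover. The key observation is that for any $k \ge \lceil 1/\lambda \rceil$, the weight vector $w = (1,0,\ldots,0)$ of length $k$ lies in the admissible class: its only non-zero entry sits at position $1 \le \lambda k$, so $w_i = 0$ for all $i > \lambda k$. Under this weight vector \probowakmedian degenerates into the minimization Chamberlin--Courant problem---the cost of a client $D_j$ for a size-$k$ committee $C$ is exactly $\min_{F_i \in C} c_{ij}$.

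Given a \probsetcover instance $(U, \mathcal{S}, k_0)$, I would first pad it, if necessary, to guarantee $k_0 \ge \lceil 1/\lambda\rceil$. Concretely, add $k_\lambda := \lceil 1/\lambda\rceil$ fresh ``private'' elements together with $k_\lambda$ singleton sets that are the only covers of those elements, and raise the budget by $k_\lambda$; this preserves the Yes/No answer and keeps the instance polynomial. I would then build the \probowakmedian instance with one client per element, one facility per set, $k := k_0$, weight vector $(1,0,\ldots,0)$, and $0/1$ costs $c_{ij} = 0$ iff element $j$ lies in set $i$. With this weight vector, the total cost of any committee equals the number of elements left uncovered.

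The gap is now immediate. A Yes-instance admits $k_0$ sets covering $U$, whose corresponding facilities yield $\OPT = 0$; a No-instance forces every choice of $k$ facilities to leave at least one client at cost $1$, so $\OPT \ge 1$. Any polynomial-time $\alpha$-approximation algorithm, for any positive computable $\alpha$, would then be forced to output $0$ on Yes-instances (its value is bounded by $\alpha \cdot \OPT = 0$) and a positive number on No-instances (its value is at least $\OPT$), which would decide \probsetcover in polynomial time, contradicting $\p \ne \np$.

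There is no deep technical obstacle in this approach; the argument is just the standard ``$\OPT = 0$ vs.\ $\OPT \ge 1$'' gap already used for the Chamberlin--Courant rule in~\cite{SkowronFS15}. The only point that requires care is verifying that the chosen weight vector fits the $\lambda$-fraction condition for the exact value of $k$ produced by the reduction, which is precisely what the $k_0 \ge \lceil 1/\lambda\rceil$ padding guarantees.
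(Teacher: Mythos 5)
Your proposal reads the theorem as a hardness result for \probowakmedian when the input weight vector is merely constrained to lie in the $\lambda$-fraction class; under that reading you are free to pick $w = (1,0,\ldots,0)$ and reduce to minimization Chamberlin--Courant, which is a clean and correct gap argument. However, that is not what the theorem asserts. The preamble to this result in the appendix fixes, once and for all, a family $\{w^{(k)}\}_{k\in\naturals}$ and defines the $\lambda$-fraction condition as a property of that \emph{fixed} family; the theorem then claims inapproximability for every such family with at least one positive entry. The paper's own proof reflects this: it writes ``Let $p$ be the index of the last positive weight in the sequence $w^{(k)}$'' and proceeds without ever choosing the weight vector. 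Your argument covers only the single family $w^{(k)} = (1,0,\ldots,0)$ (i.e., $p=1$), so it does not establish the theorem for, say, $w^{(k)} = (1,\ldots,1,0,\ldots,0)$ with $\lfloor \lambda k\rfloor$ ones.

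The missing idea is how to force ``$\OPT = 0$'' to encode a cover when the rule genuinely pays attention to the first $p > 1$ opened facilities per client. The paper handles this with a padding gadget: it reduces from \textsc{X3C}, sets $k = \lceil n/(1-\lambda)\rceil$ (so that $k - p \ge n$), and adds $p-1$ dummy facilities $H$ at cost $0$ to \emph{every} client, plus a further set $H'$ of $k - n - p + 1$ dummy facilities matched bijectively to dummy clients $G$. In the YES direction, opening $H \cup H' \cup (\text{exact cover})$ gives every client exactly $p$ facilities at cost $0$, so the $w$-cost is $0$ because all remaining weights $w_{p+1},\ldots,w_k$ vanish. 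In the NO direction, the dummy clients in $G$ force $H\cup H'$ to be opened, leaving exactly $n$ slots for $\calS$-facilities, which must then form an exact cover. Your reduction from \probsetcover with the $(1,0,\ldots,0)$ weights is essentially the $p=1$ special case; to obtain the general statement you would need an analogue of the $H$, $H'$, $G$ construction so that each client accumulates $p$ zero-cost facilities in the YES case regardless of the shape of $w^{(k)}$.
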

\begin{proof}
Let us fix a function $\alpha$ and for the sake of contradiction, let us assume that there exists a polynomial time $\alpha$-approximation algorithm $\calA$ for \probowakmedian. We will show that $\calA$ can be used to find exact solutions to the exact set cover problem, \textsc{X3C}. This will stay in contradiction with the fact that \textsc{X3C} is $\np$-hard~\cite{GareyJ79}.

Let $I$ be an instance of \textsc{X3C}, where we are given a set of $3n$ elements $E = \{e_1, \ldots, e_{3n}\}$, and a collection $\calS$ of subsets of $E$ such that each set in $\calS$ contains exactly 3 elements from $E$. We ask if there exists a subcollection $C$ of $n$ subsets from $\calS$ such that each element from $E$ belongs to exactly one set from $C$.

From $I$ we construct an instance of \probowakmedian in the following way. First, we set the size of the committee $k = \left\lceil \frac{n}{1-\lambda} \right\rceil$. Let $p$ be the index of the last positive weight in the sequence $w^{(k)}$. Since clients care only about the $\lambda$-fraction of facilities, we know that $k - p > k - \lambda k \geq n$. Our set of facilities consists of three groups $\Fcal = \calS \cup H \cup H'$, i.e., we have facilities which correspond to subsets from $\calS$ and two groups of dummy facilities. We set $|H| = p-1$ and $|H'| = k-n-p+1$. Our set of clients consist of two groups $\Dcal = E \cup G$, where $G$ is the set of dummy clients with $|G| = |H'|$. Let us now describe the preferences of clients over facilities. For each client $j \in \Fcal$ and each dummy facility $i \in H$ we set $c_{i, j} = 0$. Further, for each non-dummy client $j \in E$ and a non-dummy facility $i \in \calS$ we set $c_{i, j} = 0$ if and only if $j \in i$. Finally, we match dummy clients from $G$ with dummy facilities from $H'$ so that each client is matched to exactly one facility and each facility to exactly one client, and set $c_{i, j} = 0$ whenever $i$ and $j$ are matched. For all remaining pairs $(i, j)$ we set $c_{i, j} = 1$.

Let $C^*$ be an optimal set of facilities. We will show that the total cost of clients from $C^*$ is 0 if and only if there exists an exact cover for our initial instance $I$.

$(\implies)$ Assume there exists an exact cover in $I$---let $C$ denote the collection of $n$ subsets covering all the elements. If we set $C^* = C \cup H \cup H'$, then each client has exactly $p$ facilities with distance $0$. Note that $|C^*|=n+p-1+k-n-p+1=k$. For the remaining $k-p$ facilities the weights are equal to zero. Thus, the total cost of clients from $C^*$ is equal to 0.

$(\impliedby)$ Assume the total cost of clients from $C^*$ is equal to 0. In particular, the clients from $G$ need to have cost equal to $0$, so $H \cup H'$ must be the part of a winning committee. Similarly, the remaining $n$ facilities must correspond to the cover of $E$.

If there exists a winning committee with the total cost of clients equal to 0, then algorithm $\calA$ would find such committee. This completes the proof. 
\end{proof}

\noindent
Hence any approximation for \probowakmedian can recognize whether the instance is YES-instance. Next, we prove a more specific hardness result for the $p$-geometric sequence of weights.

\begin{theorem}\label{thm:metric_geom_no_approximation}
Consider the \probowakmedian problem for the $p$-geometric sequence of weights. For each $c < 1$, there exists no polynomial-time $(n^{-c \ln(p) - 1})$-approximation algorithm for the problem unless $\p = \np$.
\end{theorem}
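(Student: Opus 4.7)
The plan is to give a polynomial-time reduction from \probsetcover that turns Feige's (or Dinur--Steurer's) $(1-\varepsilon)\ln n$ inapproximability into a multiplicative gap of $n^{-c\ln p-1}$ for the OWA instance. Recall that for every $\varepsilon>0$ it is $\np$-hard to distinguish \probsetcover instances $(U,\mathcal{S})$ with $|U|=n$ in which (YES) a cover of size $k^\star$ exists from those in which (NO) every cover has size strictly greater than $(1-\varepsilon)\ln n \cdot k^\star$. Fix $c<1$ as in the statement of the theorem, choose $\varepsilon>0$ small enough that $c<1-\varepsilon$, and set $D=\lfloor c\ln n\rfloor$. From $(U,\mathcal{S})$ I build an instance of \probowakmedian keeping the $n$ elements of $U$ as clients, but replacing each set $S\in\mathcal{S}$ by $D$ identical facility-copies $S^{(1)},\dots,S^{(D)}$; the cost from client $j$ to facility $S^{(d)}$ is $0$ if $j\in S$ and $1$ otherwise, and I set $k=Dk^\star$.

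In the YES case I would open all $D$ copies of each of the $k^\star$ sets of the cover, which uses exactly $k$ facilities and guarantees that every client $j$ has at least $D$ zero-cost facilities, namely the $D$ copies of any cover-set containing $j$. Its $p$-geometric cost is therefore at most $\sum_{\ell=D+1}^{k}p^{\ell-1}\le \frac{p^D}{1-p}$, so
\[
\OPT_{\mathrm{YES}} \;\le\; \frac{n\,p^D}{1-p} \;=\; \frac{n^{\,1+c\ln p}}{1-p}.
\]
In the NO case, any choice of $k$ facilities involves copies of at most $k=Dk^\star<(1-\varepsilon)\ln n\cdot k^\star$ distinct sets from $\mathcal{S}$, which by the NO promise do not cover $U$; some client $j^\star$ then has cost $1$ to every opened facility and contributes $\frac{1-p^k}{1-p}=\frac{1-o(1)}{1-p}$ to the objective. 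Consequently
\[
\frac{\OPT_{\mathrm{NO}}}{\OPT_{\mathrm{YES}}} \;\ge\; \bigl(1-o(1)\bigr)\, n^{-1-c\ln p},
\]
so any polynomial-time $n^{-c\ln p-1}$-approximation would decide Feige's promise problem by thresholding its output. A tiny decrease of $c$ within $(0,1)$ absorbs the $(1-o(1))$ factor, which yields the stated hardness under $\p\neq\np$.

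The crux of the argument is to exploit the different ways that the OWA cost and the set-cover gap depend on the chosen multiset of facilities: the \emph{coverage} produced by $k$ facility-copies is governed by the support of this multiset as a subset of $\mathcal{S}$ (at most $k$ distinct sets), whereas the $p$-geometric cost of a covered client decreases as $p$ raised to the \emph{total multiplicity} of cover-sets opened. Replicating each set $D$ times lets the YES solution push every client to depth-$D$ coverage (and thus cost $\le p^D$), while leaving the NO obstruction intact because its support is still $\le Dk^\star$ sets. The two required conditions $D<(1-\varepsilon)\ln n$ (needed for the NO promise to bite at $Dk^\star$ sets) and $-c\ln p-1>0$ (needed for the hardness exponent to be positive) are jointly satisfiable exactly when $p<e^{-1/c}$; outside that regime the exponent is nonpositive and the statement of the theorem is vacuous, which is consistent with the corollary that pins down $p<1/e$.
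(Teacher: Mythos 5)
Your reduction is essentially the same as the paper's: both replicate each set of a hard \probsetcover instance into $\Theta(\log n)$ identical facility-copies with $0/1$ costs, and both exploit the fact that depth-$D$ coverage in the YES case pushes the geometric cost down to $\approx n\,p^{D}=n^{1+c\ln p}$, while an uncovered client in the NO case forces cost $\Omega(1)$. The paper phrases the source hardness as a $(1-\varepsilon)\ln n$-inapproximability for set cover and "guesses" the optimum $k$ to turn the reduction into a set-cover approximation, whereas you invoke the equivalent gap/promise formulation (Feige, Dinur--Steurer) and threshold; these are interchangeable, and the minor slack from using $\lfloor c\ln n\rfloor$ copies versus $\lceil c\ln n+1\rceil$ is absorbed in constants exactly as you note.
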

\begin{proof}
Let us fix $c < c' < 1$. For the sake of contradiction, let us assume that there exists a polynomial-time $(n^{-c \ln(p) - 1})$-approximation algorithm $\calA$ for our problem. We will show that this algorithm can be used as an $c' \ln(n)$-approximation algorithm for the \probsetcover problem, which, unless $\p=\np$, will stay in contradiction with the approximation threshold established by Dinur et al.~\cite{DinurS14}.

Let us take an instance $I$ of the \probsetcover problem. In $I$ we are given a set of $n$ elements $E$, a collection $\calS$ of subsets of $E$. We ask about minimal $k$ such that there is a subcollection $C$ of $k$ subsets from $\calS$ such that each element from $E$ belongs to some set from $C$. Without loss of generality we can assume that $n$ is big enough to satisfy $c \ln(n) + 1 < c'\ln(n)$.

From $I$ we construct an instance $I'$ of \probowakmedian as follows. We set $P = \sum_{i = 1}^k p^{i-1}$ and $x = \lceil c \ln(n) +1 \rceil$, thus $c \ln(n) + 1 \leq x \leq c'\ln(n)$. For each element $e \in E$ we introduce one client. Further, for each set $S \in \calS$ we introduce $x$ facilities $S_1, \ldots, S_x$. The cost of a client $e$ for a facility $S_i$ is equal to 0 if and only if $e \in S$; otherwise, it is equal to one. Finally, we guess the optimal solution $k$ to $I$, and set the size of the desired set of facilities to $k_{\mathrm{fac}} = k \cdot x$. 

Let us assume that there exist a set cover $C$ of size $k$ for the original instance $I$. What is the cost of the clients in the optimal solution for $I'$? Let us consider the set of $k_{\mathrm{fac}}$ facilities $C'$ constructed in the following way: for each set from the cover $S \in C$ we take all $x$ facilities that correspond to $S$ and add them to $C'$. Observe that each client has cost equal to zero from at least $x$ facilities from $C'$. Thus, the total cost of clients from $C'$ is at most equal to (recall that $P = \sum_{i = 1}^K p^{i-1}$):
\begin{align*}
\wmin(C') = \sum_{j=1}^{n} \wmin(C', j) &\leq n \big( p^{x} + p^{x+1} + \ldots + p^{K-1}\big) \leq\\
&\leq n \big( p^{c \ln(n) + 1} + p^{c \ln(n)+2} + \ldots + p^{K-1}\big) < n p^{c \ln(n)} \cdot P =\\
         &= n \cdot e^{\ln(p) \cdot c \cdot \ln(n)} \cdot P = n \cdot n^{c \ln(p)} \cdot P = P n^{c \ln(p) + 1}\text{.}
\end{align*}
Now, let us take the set of $k_{\mathrm{fac}}$ facilities $C''$ such that some client has cost equal to one from each facility from $C''$. Then, $\wmin(C'') \geq P$. Thus, an $(n^{-c \ln(p) - 1})$-approximation algorithm for \probowakmedian with the $p$-geometric sequence of weights needs to return a solution where each client has cost equal to zero from at least one facility. From such solution, however, we can extract at most $k_{\mathrm{fac}}$ sets which form a cover of the original instance. Thus, our algorithm $\calA$ can be used to find $x$-approximation solutions for the \probsetcover problem (recall that $x < c'\ln(n)$), which is impossible under the standard complexity theory assumptions~\cite{DinurS14}. This completes the proof.
\end{proof}

Using that we obtain

\begin{corollary}\label{cor:metric_geom_no_approximation}
There is no polynomial-time constant-factor approximation algorithm for the \probowakmedian problem for the $p$-geometric sequence of weights when $p < \nicefrac{1}{e}$.
\end{corollary}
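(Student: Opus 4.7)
The plan is to deduce Corollary~\ref{cor:metric_geom_no_approximation} directly from Theorem~\ref{thm:metric_geom_no_approximation} by choosing a constant $c < 1$ for which the inapproximability factor $n^{-c\ln(p)-1}$ grows unboundedly with $n$. The assumption $p < \nicefrac{1}{e}$ gives $\ln(p) < -1$, equivalently $-\ln(p) > 1$, so $-1/\ln(p) \in (0,1)$. Pick any $c$ with $-1/\ln(p) < c < 1$; then $-c\ln(p) > 1$, i.e., the exponent $\gamma := -c\ln(p) - 1$ is strictly positive. Consequently $f(n) := n^{\gamma}$ is a positive power of $n$ that tends to infinity.

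Now I argue by contradiction. Suppose there were a polynomial-time $\alpha$-approximation algorithm $\calA$ for \probowakmedian with $p$-geometric weights for some constant $\alpha \geq 1$. Since $f(n) \to \infty$, there exists $n_0$ such that $f(n) \geq \alpha$ for all $n \geq n_0$. Then on every instance of size at least $n_0$, $\calA$ is also an $f(n)$-approximation, and on instances of size below $n_0$ we can solve the problem exactly by brute-force enumeration in $O(1)$ time (since $n_0$ is a constant, the total number of candidate committees is bounded by a constant). Combining these two behaviors yields a polynomial-time $f(n)$-approximation algorithm, contradicting Theorem~\ref{thm:metric_geom_no_approximation} under the hypothesis $\p \neq \np$.

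There is essentially no obstacle: the whole content of the corollary is packaged in the observation that the threshold $p = 1/e$ is exactly the point at which the exponent $-c\ln(p)-1$ crosses zero as $c \to 1^-$. The only thing to double-check is that the range $c \in (-1/\ln(p), 1)$ is nonempty precisely when $p < 1/e$, which we verified above. Hence no additional machinery beyond Theorem~\ref{thm:metric_geom_no_approximation} is needed.
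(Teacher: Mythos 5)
Your proof is correct and takes the same route the paper intends: the paper offers no explicit argument for the corollary (it simply writes ``Using that we obtain''), and your derivation supplies exactly the computation the authors leave implicit. You correctly identify that $p < \nicefrac{1}{e}$ is precisely the condition making the interval $\left(-\nicefrac{1}{\ln p},\, 1\right)$ nonempty, so that some admissible $c<1$ yields a strictly positive exponent $-c\ln(p)-1$, and the standard ``constant vs.\ growing factor'' contradiction (with brute force on constantly many small instances) closes the argument.
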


\end{document}